\newtheorem{theorem}{Theorem}[section]
\newtheorem{lemma}[theorem]{Lemma}
\theoremstyle{definition}
\newtheorem{definition}[theorem]{Definition}
\newtheorem{remark}[theorem]{Remark}
\newtheorem{example}{Example}
\title{Criteria for the construction of MDS convolutional codes with good column distances}
\author{Zita Abreu$^{1,2}$, Julia Lieb$^2$, Raquel Pinto$^1$, Joachim Rosenthal$^2$\\
\vspace{-4mm}\\
University of Aveiro$^1$, University of Zurich$^2$}
\date{}
\begin{document}

\maketitle

\begin{abstract}
Maximum-distance separable (MDS) convolutional codes are characterized by the property that their free distance reaches the generalized Singleton bound. In this paper, new criteria to construct MDS convolutional codes are presented. Additionally, the obtained convolutional codes have optimal first (reverse) column distances and the criteria allow to relate the construction of MDS convolutional codes to the construction of reverse superregular Toeplitz matrices. Moreover, we present some construction examples for small code parameters over small finite fields.
\end{abstract}

\section{Introduction}
Communication systems that work with digitally represented data use error correction codes because all real communication channels are noisy. One type of error-correcting codes are convolutional codes. 
The distance of a convolutional code provides a means to assess its capability to protect data from errors. Codes with larger distance are better because they allow correcting more errors. One type of distance for convolutional codes is the free distance, which is considered for decoding when the codeword is fully received. Convolutional codes with maximal free distance are called Maximum Distance Separable Codes (MDS). \\
There are some known constructions of MDS convolutional codes. The first construction was obtained by Justesen in \cite{ju75} for codes of rate $1/n$ with restricted degrees. In \cite{sm98p1} Smarandache and Rosenthal presented constructions of convolutional codes of rate $1/n$ and arbitrary degree $\delta$. However, these constructions require a larger field size than the one obtained by Justesen. Later, Gluesing-Luerssen and Langfeld presented in \cite{gll06} a construction of convolutional codes of rate $1/n$ with the same field size as the one obtained by Justesen but also with a restriction on the degree of the code. Thereafter, Gluesing-Luerssen, Smarandache and Rosenthal \cite{sm01a} constructed MDS convolutional codes for arbitrary parameters $(n,k,\delta)$ but with the restriction that $|\mathbb F|-1$ needs to be divisible by $n$ and sufficiently large. Lieb and Pinto \cite{mdssr} presented a new way of constructing MDS convolutional codes of any degree and sufficiently low rate using superregular matrices. \\
In contrast to MDS block codes, there is no algebraic criterion to check whether a convolutional code is MDS. In \cite{mdssr} a sufficient criterion for MDS convolutional codes was presented, which however only works if the length $n$ of the code is sufficiently large with respect to rank $k$ and degree $\delta$. \\
In this paper, a new method for constructing MDS convolutional codes will be presented. This method allows the construction also for several sets of code parameters for which so far no constructions of MDS convolutional codes existed. Additionally, the codes fulfilling our new criteria have optimal first column distances and if $k\mid\delta$ also optimal first reverse column distances. In particular, we will relate these codes to reverse superregular Toeplitz matrices.
Moreover, the field size in code constructions is an important factor to be considered since it is directly related to the computational efficiency of the encoding and decoding algorithms and the complexity of the decoding algorithms. Therefore, 
we also present some construction examples for small code parameters over small finite fields.\\
The outline of the paper is as follows. In Section 2 we present some preliminaries about convolutional codes in general and about the construction of convolutional codes with optimal distances in particular. In Section 3 we present our new criteria for the construction of MDS convolutional codes with optimal first column distances. In Section 4, we show how depending on the code parameters the conditions of Section 3 can be released to enable easier construction over fields of smaller size. In Section 5.1 we present general construction methods for codes fulfilling the criteria of Section 3 and in Section 5.2 we give some concrete examples for small code parameters optimizing the field size with the help of the results of Section 4. We conclude in Section 6 with a brief summary.

\section{Preliminaries}

A \textbf{convolutional code} $\mathcal{C}$ of rate $k/n$ is an $\mathbb{F}_{q}[z]$-submodule of $\mathbb{F}_{q}^n[z]$ of rank $k$, where $\mathbb{F}_{q}[z]$ is the ring of polynomials with coefficients in the field $\mathbb{F}_{q}$. A matrix $G(z)\in\mathbb{F}_{q}[z]^{k\times n}$ whose rows constitute a basis of $\mathcal{C}$ is called a \textbf{generator matrix} for $\mathcal{C}$. It is a full row rank matrix such that: \begin{eqnarray}
\mathcal{C} =\ \text{Im}_{\mathbb{F}_{q}[z]} \ G(z)=  \{v(z) \in \mathbb{F}_{q}^{n}[z]: v(z) = u(z)G(z) \text{ with } u(z) \in \mathbb{F}_{q}^{k}[z]\}.\nonumber
\end{eqnarray}
The maximum degree of the full-size minors of a generator matrix of $\mathcal{C}$ is called the \textbf{degree} $\delta$ of $\mathcal{C}$. A convolutional code of rate $k/n$ and degree $\delta$ is also denoted as $(n, k, \delta)$ convolutional code.

\begin{definition}
Let $G(z)=\sum_{i=0}^{\mu}G_i z^{i}\in\mathbb F_q[z]^{k\times n}$ with $G_{\mu} \neq 0$ and $k\leq n$. For each $i$, $1\leq i\leq k$, the $i$-th \textbf{row degree} $\nu_i$ of $G(z)$ is defined as the largest degree of any entry in row $i$ of $G(z)$, in particular $\mu=\max_{i=1,\hdots,k}\nu_i$. 
\end{definition}

\begin{definition}
A matrix $G(z)\in \mathbb F_q[z]^{k\times n}$ is said to be
\textbf{row reduced}\index{row reduced} if the maximum degree of its full-size minors, $\delta$, is equal to the sum of its row degrees. In this case, $G(z)$ is called a minimal generator matrix of the convolutional code it generates. 
\end{definition}

\begin{definition}
$G(z)\in\mathbb F_q[z]^{k\times n}$ is said to have \textbf{generic row degrees}
if $\nu_1=\hdots=\nu_t=\lceil\frac{\delta}{k}\rceil=\mu$ and $\nu_{t+1}=\hdots=\nu_k=\lfloor\frac{\delta}{k}\rfloor$ for $t=\delta+k-k\lceil\frac{\delta}{k}\rceil$. If $k\nmid\delta$, we denote by $\tilde{G}_{\mu}$ the matrix consisting of the first $t$ nonzero rows of $G_{\mu}$.
\end{definition}

The \textbf{free distance} of a convolutional code measures its capability of detecting and correcting errors introduced during information transmission through a noisy channel and it is defined by $$d_{free}(\mathcal{C})= \min \{wt(v(z)) | v(z) \in \mathcal{C}, v(z) \neq 0\},$$ where $wt(v(z)) = \sum^{\deg(v(z))}_{t=0} wt (v_t)$ is the Hamming weight of $v(z) = \sum_{t=0}^{\deg (v(z))}v_{t}z^{t} \in \mathbb{F}_{q}^n[z]$ and the weight $wt(v)$ of $v \in \mathbb{F}_{q}^{n}$ is the number of nonzero components of $v$. 
In \cite{zora22153}, the authors obtained an upper bound for the free distance of an $(n,k,\delta)$ convolutional code $\mathcal{C}$ 
given by $$d_{free}(\mathcal{C}) \leq (n-k)\Big( \Big\lfloor \frac{\delta}{k} \Big\rfloor +1 \Big) + \delta +1=:S.$$
This bound is called \textbf{generalized Singleton bound}. An $(n,k,\delta)$ convolutional
code with free distance equal to this bound is called \textbf{Maximum Distance Separable (MDS)} convolutional code. 

Besides the free distance, for convolutional codes one usually also considers its column distances since these are the measure for the error-correcting capability if low decoding delay is required.

\begin{definition}
For a polynomial vector ${v}(z)=\sum_{j\in\mathbb{N}_0}{v}_jz^j \in \mathbb{F}_q^n[z]$, the \textbf{j-th truncation}\index{truncation} of ${v}(z)$ is defined as
\[
{v}_{[0,j]}(z)={v}_0+{v}_1z+\cdots+{v}_jz^j.
\]
For $j\in\mathbb N_0$, the \textbf{j-th column distance}\index{column distance} of a convolutional code $\mathcal{C}$ is defined as
\[
d_j^c(\mathcal{C}):=\min\left\{wt({v}_{[0,j]}(z))\ |\ {v}(z)\in\mathcal{C} \text{ and }{v}_0 \neq \mathbf{0}\right\}.
\]
\end{definition}
\begin{theorem}\cite{gl06} Let $\mathcal{C}$ be an $(n,k,\delta)$ convolutional code. Then,
$d_j^c (\mathcal{C}) \leq (n-k)(j + 1) + 1$.
\end{theorem}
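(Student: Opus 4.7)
The plan is to proceed by induction on $j$, after passing to a generator matrix $G(z)=\sum_{i=0}^{\mu}G_i z^i$ of $\mathcal{C}$ whose constant term $G_0$ has full row rank $k$. This is harmless because $d_j^c(\mathcal{C})$ depends only on the code, and such a $G$ always exists (for instance, any basic generator matrix works, since evaluating a polynomial right inverse at $z=0$ forces $G_0 H_0 = I_k$). The full-rank property will be used both to identify the condition ``$v_0\neq 0$'' with ``$u_0\neq 0$'' and to supply a useful coset-weight bound.

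For the base case $j=0$, the set $\{v_0 : v\in\mathcal{C}\}=\{u_0 G_0 : u_0\in\mathbb{F}_q^k\}$ is the row space of $G_0$, a $k$-dimensional subspace of $\mathbb{F}_q^n$. The classical Singleton bound for linear block codes yields a nonzero vector of weight at most $n-k+1$ in this subspace, so $d_0^c(\mathcal{C})\le n-k+1$.

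For the inductive step, assume the bound for $j-1$ and pick a codeword $v^*(z)=u^*(z)G(z)\in\mathcal{C}$ with $v_0^*\neq 0$ and $wt(v_{[0,j-1]}^*)=d_{j-1}^c(\mathcal{C})$. Fix $k$ linearly independent columns of $G_0$, indexed by $S\subset\{1,\ldots,n\}$ with $|S|=k$; then for every $w\in\mathbb{F}_q^n$ the equation $(w+uG_0)|_S=0$ has a unique solution $u\in\mathbb{F}_q^k$, so $wt(w+uG_0)\le n-k$. Apply this to $w=v_j^*$ to obtain a suitable $u$, and form the modified codeword $v^{**}(z):=v^*(z)+u z^j G(z)=(u^*(z)+u z^j)G(z)\in\mathcal{C}$. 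Since $u z^j G(z)$ contains only powers of $z$ of degree $\ge j$, the truncation $v^{**}_{[0,j]}$ agrees with $v^*_{[0,j-1]}$ in positions $0,\ldots,j-1$ (so $v_0^{**}=v_0^*\neq 0$), while its $j$-th coefficient is $v_j^{**}=v_j^*+uG_0$. Hence
\[
d_j^c(\mathcal{C})\ \le\ wt(v_{[0,j]}^{**})\ =\ wt(v_{[0,j-1]}^*)+wt(v_j^*+uG_0)\ \le\ d_{j-1}^c(\mathcal{C})+(n-k)\ \le\ (n-k)(j+1)+1,
\]
which closes the induction.

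The main delicate point is securing the full row rank of $G_0$: without it, neither the bijection between leading inputs and leading outputs nor the coset-weight bound of $n-k$ is available, and one would instead have to play shift-and-project games with the sliding generator matrix $G_j^c$, which become awkward when a minimum-weight representative happens to have $v_0=0$. Arranging this rank condition at the outset by a single choice of generator matrix removes the obstacle cleanly, after which the argument reduces to standard linear algebra.
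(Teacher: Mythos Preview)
The paper does not prove this statement itself; it is quoted from \cite{gl06} as background, so there is no in-paper argument to compare against.

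Your proof is correct and is in fact the standard one: the Singleton bound for the block code $\mathrm{rowspace}(G_0)$ handles $j=0$, and the coset-leader step---choosing $u$ so that $v_j^{*}+uG_0$ vanishes on $k$ independent columns of $G_0$ and adjoining $uz^jG(z)$---yields $d_j^c\le d_{j-1}^c+(n-k)$ without disturbing $v_0^{*}$. This is exactly how the bound is derived in \cite{gl06}.

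One point deserves to be made explicit rather than parenthetical. Your reduction to a generator matrix with $G_0$ of full row rank relies on the existence of a basic (polynomially right-invertible) generator matrix. Under the paper's bare definition of a convolutional code as an arbitrary rank-$k$ submodule of $\mathbb{F}_q[z]^n$, such a matrix need not exist, and without the full-rank hypothesis the inequality can genuinely fail already at $j=0$: take $n=10$, $k=9$, with generators $(1,\dots,1)$ and $ze_1,\dots,ze_8$; then every $v_0\neq 0$ has weight $10$, while the bound predicts $2$. In \cite{gl06}, and implicitly throughout this paper (cf.\ the hypotheses of Theorem~\ref{mdpcrit} and Lemma~\ref{maxim}), the full-rank condition on $G_0$ is in force, so your assumption is the intended one---but it is a hypothesis, not a convenience, and should be stated as such.
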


In order to present a criterion for how to check if the column distances reach their upper bound, we first need to introduce the following definition.

\begin{definition}
For $r\in\mathbb N$, let $A\in\mathbb F_q^{r\times r}$ be a matrix with $(i,j)$-entry denoted by $a_{ij}$. Define $X=\{x_{ij} \;:\; i,j\in\{1,\ldots, r\}\}$ and let $\mathbb{F}_q[X]$ be the set of polynomials in the indeterminates $x_{ij}$. Then, define $\overline{A}\in\mathbb{F}_q[X]^{r\times r}$ as the matrix with $(i,j)$-entry equal to $$\overline{a}_{ij}=\begin{cases} 0 & \text{for}\ a_{ij}=0 \\ x_{ij} & \text{for}\ a_{ij}\neq 0 \end{cases}$$
 The determinant of $A$ is called \textbf{trivially zero}\index{trivially zero} if the  determinant of $\overline{A}$ is equal to the zero polynomial. Otherwise, this determinant is said to be \textbf{non trivially zero}. A matrix that has the property that all its non trivially zero minors (of all sizes) are nonzero, is called \textbf{superregular}.
\end{definition}

\begin{theorem}[\cite{gl06}] Let $\mathcal{C}$ be a convolutional code with generator matrix $G(z)=\sum_{i=0}^{\mu}G_iz^i\in \mathbb{F}_q[z]^{k\times n}$ with $G_0$ full rank and $0\leq j\leq L:=\left\lfloor\frac{\delta}{k}\right\rfloor+\left\lfloor\frac{\delta}{n-k}\right\rfloor$.
The following statements are equivalent:
  \begin{itemize}
  \item[(a)] $d_j^c (\mathcal{C})=(n-k)(j + 1) + 1$.
  \item[(b)]
   Every full-size minor of $G_j^c=\begin{pmatrix}
G_0 & \cdots & G_{j}\\
 & \ddots & \vdots\\
0 &  & G_0
\end{pmatrix}$ that is non trivially zero
     is nonzero.
  \end{itemize}
\label{mdpcrit}
\end{theorem}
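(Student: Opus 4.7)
The plan is to translate both conditions into equivalent statements about left kernels of full-size square submatrices of the sliding matrix $G_j^c$. To set up the dictionary, I would write any codeword as $v(z)=u(z)G(z)$ with $u(z)=\sum_{i\ge 0}u_i z^i \in \mathbb{F}_q^k[z]$, and expand the polynomial product to obtain
\[
(v_0,v_1,\ldots,v_j)\;=\;(u_0,u_1,\ldots,u_j)\,G_j^c \;\in\;\mathbb{F}_q^{(j+1)n}.
\]
Because $G_0$ has full row rank, $v_0\neq 0$ is equivalent to $u_0\neq 0$. Therefore (a) is equivalent to the assertion that no $(u_0,\ldots,u_j)$ with $u_0\neq 0$ makes $(u_0,\ldots,u_j)G_j^c$ possess $(j+1)k$ or more zero entries. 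Equivalently, no set $T$ of $(j+1)k$ column indices of $G_j^c$ admits a left kernel vector of $M:=(G_j^c)_{:,T}$ whose first block component $u_0$ is nonzero.

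For $(b)\Rightarrow(a)$, I would suppose for contradiction that such $T$ and $u$ exist, so the square submatrix $M$ is singular. The main step is then to argue that $\det M$ must be \emph{non trivially} zero, after which (b) forces $\det M\neq 0$, a contradiction. This is where the block-Toeplitz layout of $G_j^c$ enters: the coupling of $u_0\neq 0$ with the first block row $(G_0\mid G_1\mid\cdots\mid G_j)$ is used, via a Hall-type marriage argument on the support pattern of $M$, to exhibit a permutation placing a nonzero entry in every row of $M$.

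For $(a)\Rightarrow(b)$, I would reason contrapositively. Given $M=(G_j^c)_{:,T}$ with non trivially zero yet vanishing determinant, a nonzero left-kernel vector $u$ yields $uG_j^c$ vanishing on $T$, hence of weight at most $(j+1)(n-k)$. The work is to ensure the kernel vector may be taken with $u_0\neq 0$: the non-triviality of $\det M$ provides a permutation $\sigma$ hitting a nonzero entry in each row of $M$, and combined with the Toeplitz shift structure of $G_j^c$, this allows one to adjust the kernel vector so that its top block is nonzero. The resulting truncated codeword then satisfies $v_0\neq 0$ and contradicts (a).

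The main obstacle, in both directions, is the alignment of the condition $u_0\neq 0$ (inherited from $v_0\neq 0$ in the definition of $d_j^c$) with the \emph{non trivially zero} condition on $\det M$. Connecting these is essential because trivially zero singular submatrices — those forced to be singular by the staircase of zeros in $G_j^c$ — should be ignored, and only the structurally nondegenerate singular minors encode genuine low-weight truncated codewords. I expect this combinatorial matching to be handled by Hall's marriage theorem applied to the zero pattern of $M$ together with the block-Toeplitz shifts, and this is where the bulk of the technical work lies.
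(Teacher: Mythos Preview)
The paper does not prove this theorem; it is quoted without proof as a preliminary result from \cite{gl06}. There is therefore no proof in the present paper to compare your proposal against.

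For what it is worth, your plan is sound and aligns with the argument in the original reference. The reduction to the sliding matrix $G_j^c$ and the equivalence $v_0\neq 0\Leftrightarrow u_0\neq 0$ (via $G_0$ full rank) are exactly the right setup, and the crux is indeed the combinatorial link between ``$u_0\neq 0$'' and ``non trivially zero minor''. In \cite{gl06} this link is made precise by the observation that a full-size minor of $G_j^c$ on a column set $T$ is non trivially zero if and only if, for every $r=1,\ldots,j$, at most $rk$ of the selected columns lie in the first $r$ column blocks; this Hall-type characterization of the block-Toeplitz zero pattern is what lets one choose (in the $(b)\Rightarrow(a)$ direction) a suitable $T$ inside the zero set of $uG_j^c$, and (in the $(a)\Rightarrow(b)$ direction) shift a kernel vector with $u_0=0$ down to a smaller instance and recurse. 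Your sketch identifies this correctly as the place where the work lies. The bound $j\le L$ plays no role in the equivalence itself; it only records the range of $j$ for which (a) can hold at all.
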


The following lemmata will be needed to prove our main statements in the next sections.

\begin{lemma}[\cite{gl06}]
Let $\mathcal{C}$ be an $(n,k,\delta)$ convolutional code with generator matrix $G(z)$ and $G_0$ full rank. If $d_j^c(\mathcal{C})=(n-k)(j+1)+1$ for some
  $j\in\{1,\hdots,L\}$, then  $d_i^c(\mathcal{C})=(n-k)(i+1)+1$ for all
  $i\leq j$.
\label{maxim}
\end{lemma}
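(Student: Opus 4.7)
The plan is to apply Theorem \ref{mdpcrit}, which recasts the equality $d_j^c(\mathcal{C})=(n-k)(j+1)+1$ as the condition that every non-trivially zero full-size minor of the sliding generator matrix $G_j^c$ is nonzero. Granting this, I would show for each $i\in\{0,\ldots,j-1\}$ that every non-trivially zero full-size minor of $G_i^c$ is likewise nonzero, and then apply Theorem \ref{mdpcrit} in the reverse direction to obtain $d_i^c(\mathcal{C})=(n-k)(i+1)+1$. Two structural facts make this work: $G_i^c$ sits as the top-left $(i+1)k\times(i+1)n$ block of $G_j^c$, and the upper block-triangularity of $G_j^c$ forces the $(j-i)k\times(i+1)n$ block of $G_j^c$ directly beneath it to vanish identically.

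The core step is a column-extension argument. Fix $i<j$ and let $M$ be a non-trivially zero full-size minor of $G_i^c$ formed by some choice of $(i+1)k$ columns. Since $G_0$ has full row rank, pick indices $j_1<\cdots<j_k$ so that the $k\times k$ submatrix $H$ of $G_0$ on those columns is invertible. Extend $M$ to a full-size minor $M^*$ of $G_j^c$ by adjoining, for each $s\in\{i+1,\ldots,j\}$, the $k$ columns of $G_j^c$ at positions $sn+j_1,\ldots,sn+j_k$. In the lower $(j-i)k$ rows of $M^*$, the original columns of $M$ contribute zeros by the remark above, while the adjoined columns produce a block upper triangular matrix $B$ whose diagonal blocks are all copies of $H$. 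Consequently
\[
M^*=\begin{pmatrix}M & *\\ 0 & B\end{pmatrix},\qquad \det(M^*)=\det(M)\cdot\det(H)^{j-i}.
\]

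It remains to confirm that $M^*$ is itself non-trivially zero so that the hypothesis on $G_j^c$ applies. Under the indeterminate substitution of the definition, the identically zero sub-block stays zero, the lower-right block remains block upper triangular with diagonal blocks $\overline{H}$, and the factorisation $\det(\overline{M^*})=\det(\overline{M})\cdot\det(\overline{H})^{j-i}$ persists as an identity of polynomials. Both factors are nonzero polynomials, the first by the hypothesis on $M$ and the second because $H$ is invertible. Hence $M^*$ is non-trivially zero, the hypothesis forces $\det(M^*)\neq 0$, and dividing by $\det(H)^{j-i}\neq 0$ yields $\det(M)\neq 0$, as required. The main obstacle, and the step most in need of care, is ensuring that the chosen extension preserves non-trivial zero-ness after the formal substitution; once the block-triangular shape is fixed this is straightforward, and the remainder of the argument is mechanical.
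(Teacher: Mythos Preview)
The paper does not prove this lemma; it is simply cited from \cite{gl06}. So there is no proof in the paper to compare against.

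Your argument is correct and is essentially the standard one: translate the column-distance condition into the minor condition via Theorem~\ref{mdpcrit}, embed a full-size minor of $G_i^c$ as the top-left block of a full-size minor of $G_j^c$ by padding with $k$ columns per extra time step chosen so that the new diagonal blocks are invertible $k\times k$ submatrices of $G_0$, and use the block-triangular factorisation. The only point that deserves a sentence of extra justification is the claim that $\det(\overline{H})\neq 0$ as a polynomial. You infer this from ``$H$ invertible''; the missing line is that $\det(H)\neq 0$ forces at least one permutation term $\prod_i h_{i,\sigma(i)}$ to be nonzero, and the corresponding monomial $\prod_i x_{i,\sigma(i)}$ then survives in $\det(\overline{H})$ because distinct permutation monomials are linearly independent. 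With that spelled out, the check that $\overline{M^*}$ is block upper triangular with $\det(\overline{M^*})=\det(\overline{M})\cdot\det(\overline{B})$ a product of nonzero polynomials in disjoint indeterminates goes through, and the rest is as you say.
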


\begin{lemma}\label{le2}
Let $A \in \mathbb F_q^{r \times s}$ with $r\leq s$ be such that all its full-size minors are nonzero. Then, each vector which is a linear combination of the $r$ rows of $A$ has at least $s-r+1$ nonzero entries.
\end{lemma}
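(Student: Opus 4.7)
The plan is to argue by contradiction using the standard ``MDS generator matrix'' trick. Suppose some nonzero linear combination $v = uA$ (with $u\in\mathbb F_q^r$, $u\neq 0$) has fewer than $s-r+1$ nonzero entries, i.e.\ at most $s-r$ nonzero entries. Equivalently, $v$ vanishes on at least $r$ of the $s$ coordinates.

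I would then select an index set $J\subseteq\{1,\dots,s\}$ with $|J|=r$ such that $v_j=0$ for every $j\in J$, and let $A_J\in\mathbb F_q^{r\times r}$ denote the submatrix of $A$ obtained by keeping the columns indexed by $J$. Restricting the identity $v=uA$ to the coordinates in $J$ yields $uA_J=0$ as a row vector identity. Since $u\neq 0$, this forces $A_J$ to be singular, so $\det(A_J)=0$.

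But $\det(A_J)$ is exactly one of the full-size minors of $A$, which by hypothesis is nonzero. This is the desired contradiction, so every nonzero $v=uA$ must have at least $s-r+1$ nonzero entries.

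There is no real obstacle here; the statement is a one-step linear algebra fact (essentially the Singleton-bound characterization of MDS block codes). The only care needed is to note that the hypothesis ``all full-size minors are nonzero'' applies to every $r\times r$ submatrix, which is precisely what lets us conclude singularity of $A_J$ is impossible for any choice of $r$ columns.
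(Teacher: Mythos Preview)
Your proof is correct; it is the standard Singleton-bound argument for MDS generator matrices. Note that the paper itself does not supply a proof of this lemma---it is stated as a preliminary fact (used also in \cite{mdssr}) and then applied as a tool in the proofs of Theorems \ref{main} and \ref{main2}---so there is nothing to compare against, and your argument is exactly the natural one.
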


The last lemma was also used in \cite{mdssr} for the construction of MDS convolutional codes resulting in the following theorems.


\begin{theorem} \cite{mdssr} \label{th2}
Assume that $\delta< k$ and let $\begin{pmatrix}
    \tilde{G}_{\mu}\\ G_{\mu-1}\\ \vdots \\ G_0
\end{pmatrix}$ be superregular.
If $n\geq \delta+k-1$,
then $G(z)$ is the generator matrix of an $(n,k,\delta)$ MDS convolutional code.
\end{theorem}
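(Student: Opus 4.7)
The plan is to show directly that every nonzero codeword $v(z)\in\mathcal{C}$ has Hamming weight at least the generalized Singleton bound $S=(n-k)+\delta+1$; combined with the upper bound $d_{free}(\mathcal{C})\le S$ this yields the MDS property. Because $\delta<k$ we have $\mu=1$ and $t=\delta$, so $G(z)=G_0+G_1 z$ with $G_1=\begin{pmatrix}\tilde G_1\\ 0\end{pmatrix}$, where $\tilde G_1$ occupies the top $\delta$ rows and the remaining $k-\delta$ rows of $G_1$ vanish. Split $G_0=\begin{pmatrix}G_0^{(1)}\\ G_0^{(2)}\end{pmatrix}$ with $G_0^{(2)}$ of size $(k-\delta)\times n$, and split each $u_i\in\mathbb{F}_q^k$ correspondingly as $(u_i^{(1)},u_i^{(2)})$. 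Then the coefficients of $v(z)=u(z)G(z)$ satisfy $v_j=u_j^{(1)}G_0^{(1)}+u_j^{(2)}G_0^{(2)}+u_{j-1}^{(1)}\tilde G_1$ (with the convention $u_{-1}=u_{d+1}=0$).

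First I would observe that superregularity of $M=\begin{pmatrix}\tilde G_1\\ G_0\end{pmatrix}$ passes, via the non-trivially-zero-minor criterion, to each of the submatrices $\tilde G_1$, $G_0$, $G_0^{(2)}$ and $\begin{pmatrix}G_0^{(2)}\\ \tilde G_1\end{pmatrix}$, so Lemma \ref{le2} applies to each of them. Given $v(z)=u(z)G(z)\neq 0$, after factoring out a power of $z$ (which preserves $wt$) I may assume $u_0\neq 0$ and set $d=\deg u(z)$, so that $u_d\neq 0$. The case analysis then proceeds according to where the top parts $u_i^{(1)}$ sit. If every $u_i^{(1)}$ vanishes, then $v_j=u_j^{(2)}G_0^{(2)}$ for all $j$ and some such coefficient is a nonzero linear combination of rows of $G_0^{(2)}$, so Lemma \ref{le2} delivers $wt(v_j)\ge n-(k-\delta)+1=S$. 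Otherwise let $a,b$ be the minimal and maximal indices with $u_i^{(1)}\neq 0$. If $a\ge 1$, then $v_0=u_0^{(2)}G_0^{(2)}$ is nonzero (since $u_0\neq 0$ and $u_0^{(1)}=0$) and already achieves weight $\ge S$; if $a=0$ and $b\le d-2$, the same argument applied to $v_d=u_d^{(2)}G_0^{(2)}$ settles the matter.

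The two remaining subcases are those where the hypothesis $n\ge k+\delta-1$ is used sharply. If $a=0$ and $b=d-1$, Lemma \ref{le2} applied to the $k\times n$ matrices $G_0$ and $\begin{pmatrix}G_0^{(2)}\\ \tilde G_1\end{pmatrix}$ gives $wt(v_0)\ge n-k+1$ and $wt(v_d)\ge n-k+1$ (in each case the coefficient vector of the underlying combination is nonzero), so the total weight is at least $2(n-k+1)\ge S=n-k+\delta+1$, which holds precisely because $n\ge k+\delta-1$. If $a=0$ and $b=d$, then $v_0=u_0 G_0$ has weight $\ge n-k+1$ and $v_{d+1}=u_d^{(1)}\tilde G_1$ has weight $\ge n-\delta+1$; their sum $2n-k-\delta+2$ exceeds $S$ by the easier bound $k\ge\delta+1$. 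The main obstacle is the subcase $a=0$, $b=d-1$: no single coefficient $v_j$ can be shown to beat the block-code bound $n-k+1$, and one is forced to combine two such contributions, which is exactly what makes the numerical hypothesis $n\ge k+\delta-1$ tight for the argument.
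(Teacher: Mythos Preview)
The paper does not actually prove this statement; Theorem~\ref{th2} is quoted from \cite{mdssr} as background, so there is no in-paper proof to compare against. Your argument is correct and is in fact the specialization to $\mu=1$ of the method the present paper attributes to \cite{mdssr} in Section~4: it describes that paper's technique as ``splitting $\mathcal{G}$ fully into blocks of columns consisting of $n$ columns each,'' which is exactly what examining the individual coefficients $v_j$ amounts to.

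Your case analysis is complete. The only place where the hypothesis $n\ge k+\delta-1$ is sharp is the subcase $a=0$, $b=d-1$, and there you correctly identify the two relevant $k\times n$ row submatrices of the superregular matrix $\begin{pmatrix}\tilde G_1\\ G_0\end{pmatrix}$, namely $G_0$ for $v_0$ and (up to row permutation) $\begin{pmatrix}\tilde G_1\\ G_0^{(2)}\end{pmatrix}$ for $v_d$, and apply Lemma~\ref{le2} to each to get $2(n-k+1)\ge S$. The remaining subcases either produce a single coefficient already of weight $n-(k-\delta)+1=S$ via $G_0^{(2)}$, or use $v_0$ and $v_{d+1}=u_d^{(1)}\tilde G_1$ with the slack coming from $k>\delta$. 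All applications of Lemma~\ref{le2} are legitimate since the matrices involved are row submatrices of the superregular $(\delta+k)\times n$ matrix (hence have all full-size minors nonzero) and satisfy $r\le s$ because $n>k>k-\delta$ and $n>\delta$.
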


\begin{theorem} \cite{mdssr} \label{th2}
Assume that $\delta\geq k$ and let  $\begin{pmatrix}
    \tilde{G}_{\mu}\\ G_{\mu-1}\\ \vdots \\ G_0
\end{pmatrix}$ be superregular. Moreover, assume that if $k\mid\delta$, all full-size minors of $[G_0\ \hdots G_{\mu}]$ are nonzero, and if $k\nmid\delta$, all full-size minors of $[G_0\ \hdots G_{\mu-1}]$ and  $\tilde{G}_{\mu}$ are nonzero.
If $n\geq 2\delta+k-\nu$,
then $G(z)$ is the generator matrix of an $(n,k,\delta)$ MDS convolutional code.
\end{theorem}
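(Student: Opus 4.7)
The generalized Singleton bound gives $d_{free}(\mathcal{C})\le S$, so the task is to show that every nonzero codeword $v(z)=u(z)G(z)\in\mathcal{C}$ has $wt(v(z))\ge S$. By time-shifting I may assume $u_0\ne 0$, and let $r:=\deg u(z)$ so that $u_r\ne 0$. The strategy is to bound $wt(v(z))$ from below as the sum of a ``head'' contribution (from the blocks $v_0,\ldots,v_L$, using the MDP property delivered by the superregularity assumption) and a ``tail'' contribution (from the blocks beyond index $L$, using the extra minor hypothesis on $[G_0\mid\cdots\mid G_\mu]$ and, when $k\nmid\delta$, on $\tilde{G}_\mu$).

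For the head, I apply Theorem~\ref{mdpcrit} and Lemma~\ref{maxim} to the superregular stacked matrix to obtain $d_j^c(\mathcal{C})=(n-k)(j+1)+1$ for all $0\le j\le L$; in particular $wt(v_{[0,L]}(z))\ge(n-k)(L+1)+1$. A direct check shows this bound is in general strictly less than $S$ by exactly $\delta-(n-k)\lfloor\delta/(n-k)\rfloor$, so additional weight must be extracted from coefficient blocks indexed beyond $L$. For the tail, the last $\mu+1$ coefficient blocks $(v_r,\dots,v_{r+\mu})$ are obtained from $(u_{r-\mu},\dots,u_r)$ via a block-triangular Toeplitz map whose bottom row is exactly $[G_0\mid G_1\mid\cdots\mid G_\mu]$. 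When $k\mid\delta$, the hypothesis that all full-size minors of $[G_0\mid\cdots\mid G_\mu]$ are nonzero, combined with $u_r\ne 0$ and Lemma~\ref{le2}, forces the $u_r$-contribution to the tail blocks to carry substantial weight. When $k\nmid\delta$, the same estimate is produced by splitting the rows of $G(z)$ according to their degree and using the hypothesis on $\tilde{G}_\mu$ for the degree-$\mu$ rows and on $[G_0\mid\cdots\mid G_{\mu-1}]$ for the generic-degree rows.

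Adding the head and tail bounds, and subtracting an overlap correction when the windows $[0,L]$ and $[r,r+\mu]$ interact, the numerical hypothesis $n\ge 2\delta+k-\nu$ is precisely what guarantees the sum meets $S$. The main technical obstacle is the intermediate regime of $r$ in which the head and tail windows overlap (or $\deg v(z)\le L$ altogether), where a naive sum of the two contributions would double-count. In this regime one must replace the crude split $wt(v(z))=wt(v_{[0,L]}(z))+wt(v(z)-v_{[0,L]}(z))$ by a finer decomposition that re-invokes the \emph{full} superregularity of the stacked matrix (not merely its MDP consequence) alongside the bookend minor hypothesis, and must carefully track which $u_i$ contribute to which $v_j$. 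The subcase $k\nmid\delta$ adds a further layer of bookkeeping, since only the first $t$ rows of $G_\mu$ are nonzero and the two minor hypotheses (on $\tilde{G}_\mu$ and on $[G_0\mid\cdots\mid G_{\mu-1}]$) must be combined to cover all contributions uniformly.
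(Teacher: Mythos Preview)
Your plan has a genuine gap at its very first step. You write that you will ``apply Theorem~\ref{mdpcrit} and Lemma~\ref{maxim} to the superregular stacked matrix to obtain $d_j^c(\mathcal{C})=(n-k)(j+1)+1$ for all $0\le j\le L$''. But the hypothesis of this theorem is that the \emph{stacked} matrix
\[
\begin{pmatrix}\tilde G_\mu\\ G_{\mu-1}\\ \vdots\\ G_0\end{pmatrix}\in\mathbb F_q^{(k\mu+t)\times n}
\]
is superregular, whereas Theorem~\ref{mdpcrit} requires that the \emph{block-Toeplitz} matrices $G_j^c\in\mathbb F_q^{(j+1)k\times (j+1)n}$ have all non-trivially-zero full-size minors nonzero. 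These are entirely different objects: the stacked matrix has $n$ columns and its minors involve entries drawn from a single column-block, while a full-size minor of $G_j^c$ mixes columns from $j+1$ different block-positions and involves the Toeplitz zero pattern. Superregularity of the first does not imply the minor condition on the second, so you cannot conclude the MDP property. In fact the paper states explicitly (beginning of Section~4) that the codes produced by this theorem from \cite{mdssr} \emph{do not} have good column distances, which directly contradicts your claimed head bound.

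The actual argument (as described in Section~4 and carried out in \cite{mdssr}) proceeds differently: for each $\ell=\deg u$, one splits $\mathcal G$ vertically into column-blocks of width $n$ each, so that each resulting block is a submatrix of the stacked matrix (after a row permutation). Lemma~\ref{le2} is then applied block by block, using only the superregularity of the stacked matrix and the separate horizontal minor hypothesis on $[G_0\ \cdots\ G_\mu]$ (resp.\ $[G_0\ \cdots\ G_{\mu-1}]$ and $\tilde G_\mu$). The bound $n\ge 2\delta+k-\nu$ is what makes the sum of these per-block contributions reach $S$. Your head/tail decomposition based on $L$ and column distances is not available under these hypotheses; you need to replace it with the per-block splitting and the direct use of Lemma~\ref{le2}.
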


We will compare our new results with the results of the two preceding theorems later in the paper.

In the next section, we will relate MDS convolutional codes with codes that have the property that the code itself as well as the so-called reverse code, as introduced in the next definition, reach the upper bound for their $(\mu-1)$-th column distance.

\begin{definition}[\cite{hu08c}]
Let $\mathcal{C}$ be an $(n,k,\delta)$ convolutional code with generator matrix $G(z)$, which has entries $g_{ij}(z)$. Set $\overline{g}_{ij}(z):=z^{\nu_i}g_{ij}(z^{-1})$ where $\nu_i$ is the $i$-th row degree of $G(z)$. Then, the code $\overline{\mathcal{C}}$ with generator matrix $\overline{G}(z)$, which has $\overline{g}_{ij}(z)$ as entries, is called the \textbf{reverse code}\index{reverse code} to $\mathcal{C}$. We call the $j$-th column distance of $\overline{\mathcal{C}}$ the \textbf{$j$-th reverse column distance} of $\mathcal{C}$.
\end{definition}

While large column distances are important for error-correction during forward decoding with low delay, reverse column distances are important for backward decoding, see \cite{virtuphD}.

\begin{remark}
Let $G(z)=\sum_{i=0}^{\mu}G_iz^i$ and $\overline{G}(z)=\sum_{i=0}^{\mu}\overline{G}_iz^i$. If $k\mid\delta$, one has that $\overline{G}_i=G_{\mu-i}$ for $i=0,\hdots,\mu$.
\end{remark}


\section{Criteria and Construction of MDS convolutional codes}


In this section, we present theorems, which provide new criteria for the construction of MDS convolutional codes.
To this end, we have to distinguish the cases $k\mid\delta$ and $k\nmid\delta$.

\subsection{Criteria for MDS convolutional codes with $k\mid\delta$}

\begin{theorem}\label{main}
Consider an $(n,k,\delta)$ convolutional code $\mathcal{C}$ with $n>k$ and $k\mid\delta$ and minimal generator matrix $G(z)=\sum_{i=0}^{\mu}G_iz^i$ with generic row degrees, i.e. $\mu=\frac{\delta}{k}$.\\ If $\mu\geq 3$ assume $n\geq  3k- \frac{2k}{\delta - 2k}$, 
except for the case 
$k=2$ and $\delta = 6$ where we assume $n \geq 5$ and let all non trivially zero full-size minors of the following matrices be nonzero:
\begin{align*} &\begin{pmatrix}
G_0 & \cdots & G_{\mu-1}\\
 & \ddots & \vdots\\
0 &  & G_0
\end{pmatrix}\  \text{and} \
\begin{pmatrix}
G_{\mu} & \cdots & G_{1}\\
 & \ddots & \vdots\\
0 &  & G_{\mu}\end{pmatrix}\  \text{and}\\  &\begin{pmatrix}
G_{\ell} & \cdots & G_{\mu}\\
\vdots  &   & \vdots\\
G_0 & \cdots & G_{\mu-\ell}
\end{pmatrix} \  \text{for}\  0\leq \ell<min\left(\mu-1,\frac{n(\mu+1)-k+1}{n+k}\right).
\end{align*}
 If $\mu\leq 2$, let all non trivially zero full-size minors of 
$\begin{pmatrix}
G_0 & \cdots & G_{\mu}\\
 & \ddots & \vdots\\
0 &  & G_0
\end{pmatrix}$, $\begin{pmatrix}
G_{\mu} & \cdots & G_{1}\\
 & \ddots & \vdots\\
0 &  & G_{\mu}\end{pmatrix}$ and $[G_0\ \hdots \ G_{\mu}]$ be nonzero and assume for $\mu=1$ that $n \geq 2k-1$ and for $\mu=2$ that $n \geq 3k-2$.\\
Then, $\mathcal{C}$ is an MDS convolutional code.
\end{theorem}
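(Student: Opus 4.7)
The plan is to lower-bound the weight of an arbitrary nonzero codeword $v(z)=u(z)G(z)$ by the generalized Singleton bound $S=n(\mu+1)-k+1$. The hypothesis that the non-trivially zero full-size minors of $G_{\mu-1}^c$ are nonzero forces $G_0$ to have full row rank; combined with the time-shift invariance of $\mathcal{C}$ one may assume $u_0\ne 0$, and symmetrically the hypothesis on the reverse block matrix gives $G_\mu$ full row rank so that $u_\ell G_\mu\ne 0$ and $\deg v=\mu+\ell$, where $\ell:=\deg u$. The proof then proceeds by case analysis on $\ell$.

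For the main case $1\le \ell\le \mu-2$, I would decompose $(v_0,\ldots,v_{\mu+\ell})$ into three disjoint segments. Using $v_j=\sum_i u_{j-i}G_i$, the initial segment $(v_0,\ldots,v_{\ell-1})$ equals $(u_0,\ldots,u_{\ell-1})\,G_{\ell-1}^c$; the middle segment $(v_\ell,\ldots,v_\mu)$ equals $(u_0,\ldots,u_\ell)\,M_\ell$, where $M_\ell$ is exactly the block Hankel matrix appearing in the hypothesis; and via the remark $\overline G_i=G_{\mu-i}$ (valid since $k\mid\delta$) the final segment $(v_{\mu+1},\ldots,v_{\mu+\ell})$ is a truncation of a codeword of the reverse code $\overline{\mathcal{C}}$ with non-zero leading coefficient. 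Theorem~\ref{mdpcrit} together with Lemma~\ref{maxim}, applied to $\mathcal{C}$ and $\overline{\mathcal{C}}$, then bounds each outer segment below by $(n-k)\ell+1$, while Lemma~\ref{le2} applied to $M_\ell$ bounds the middle below by $n(\mu-\ell+1)-k(\ell+1)+1$ whenever this quantity is nonnegative; the threshold $\ell<\tfrac{n(\mu+1)-k+1}{n+k}$ built into the hypothesis is exactly this nonnegativity range.

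Summing the three bounds yields
\[
wt(v)\ \ge\ 2\bigl((n-k)\ell+1\bigr)+n(\mu-\ell+1)-k(\ell+1)+1\ =\ S+\ell(n-3k)+2,
\]
which is $\ge S$ for every $\ell\le \mu-2$ precisely when $n\ge 3k-\tfrac{2}{\mu-2}=3k-\tfrac{2k}{\delta-2k}$. The two boundary cases are handled separately:\ for $\ell=0$, Lemma~\ref{le2} applied to the single-block-row instance $M_0=[G_0\,\cdots\,G_\mu]$ gives $wt(v)\ge n(\mu+1)-k+1=S$ at once; for $\ell\ge\mu-1$, the first $\mu$ and last $\mu$ coefficient blocks of $v$ are disjoint and each carries weight $\ge(n-k)\mu+1$ from $d_{\mu-1}^c(\mathcal{C})$ and $d_{\mu-1}^c(\overline{\mathcal{C}})$, so $wt(v)\ge 2((n-k)\mu+1)\ge S$ under the weaker condition $n\ge 2k+\tfrac{k-1}{\mu-1}$ — implied by the main $n$-bound for every $(k,\mu)$ except $(2,3)$, which is exactly why the theorem asks for $n\ge 5$ separately in that case. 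Any intermediate sub-case in which $\ell\le\mu-2$ but $M_\ell$ is not in the hypothesis (because $\ell\ge\tfrac{n(\mu+1)-k+1}{n+k}$) is absorbed by the two-outer bound, since under the theorem's $n$-bound this threshold lies beyond the value at which $2((n-k)\ell+1)\ge S$ already holds. The cases $\mu\le 2$ are shorter variants of the same split, with the explicit hypothesis on $[G_0\,\cdots\,G_\mu]$ covering the $\ell=0$ sub-case.

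The main obstacle is the piecewise arithmetic: one must check sub-case by sub-case that the three-segment estimate, the boundary two-outer estimate, and any intermediate sub-case together match the theorem's $n$-bound tightly and produce exactly the $(k,\delta)=(2,6)$ exception. A secondary technical point is that Lemma~\ref{le2} as stated requires \emph{every} full-size minor of $M_\ell$ to be nonzero, not merely the non-trivially zero ones; to apply it one tacitly uses that the $G_i$ are dense enough for every full-size minor of $M_\ell$ to be non-trivially zero — a property automatic for the concrete constructions discussed later in the paper.
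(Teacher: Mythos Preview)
Your proposal is correct and follows essentially the same route as the paper: the same three-segment decomposition of $\mathcal{G}$ into $G_{\ell-1}^c$, the block-Hankel middle $M_\ell$, and the reverse tail, with Theorem~\ref{mdpcrit}/Lemma~\ref{maxim} bounding the outer pieces and Lemma~\ref{le2} bounding the middle, and the same comparison of the Case~1 bound $n\ge 2k+\tfrac{k-1}{\mu-1}$ against the Case~2 bound $n\ge 3k-\tfrac{2}{\mu-2}$ producing the lone $(k,\delta)=(2,6)$ exception. Two small remarks: for $\mu\le 2$ the ``shorter variants'' genuinely need the stronger hypothesis on $G_\mu^c$ (not just $G_{\mu-1}^c$) to recover the stated bounds $n\ge 2k-1$ and $n\ge 3k-2$, so when you fill in those cases be sure to exploit $d_\mu^c$ rather than $d_{\mu-1}^c$; and your secondary technical point about Lemma~\ref{le2} requiring \emph{all} full-size minors of $M_\ell$ to be nonzero (not merely the non-trivially zero ones) is well taken---the paper's own proof makes the same tacit use of it.
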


\begin{proof}
First of all, note that the Singleton bound when $k \mid \delta$ is $(n-k)(\frac{\delta}{k}+1)+\delta+1 = \frac{n\delta}{k}+n-k+1.$\\
Let $u(z)\in\mathbb F_q[z]$ be a message with $\deg(u)=\ell$ and let $v(z)=u(z)G(z)$ be the corresponding codeword, i.e. $\deg(v)=\mu+\ell$ due to the predictable degree property, see \cite{bookchapter}. Then, one obtains
$(v_0\ v_1\ \cdots\ v_{\mu+\ell})=(u_0\ u_1\ \cdots\ u_{\ell})\mathcal{G},$
where 
\begin{align}
 \mathcal{G}&=\begin{pmatrix}
    G_0 & \cdots & G_{\mu}  & 0 & \cdots & 0\\
  0  &  \ddots & &  \ddots & \ddots & \vdots \\
   \vdots & \ddots &  \ddots & &  \ddots & 0\\
   0  & \cdots & 0 & G_0 & \cdots & G_{\mu}
\end{pmatrix}   \quad &\text{for}\quad \ell\geq \mu \nonumber\\ \vspace{10mm}\nonumber\\
 \mathcal{G}&=\begin{pmatrix}
    G_0 & \cdots & G_{\ell} & \cdots & G_{\mu} & & 0 \\
    & \ddots & \vdots &   & \vdots & \ddots \\
   0  &  &   G_0 & \cdots & G_{\mu-\ell} & \cdots & G_{\mu}
\end{pmatrix}   \quad &\text{for}\quad \ell<\mu \nonumber
\end{align}

Note that we can assume $u_0\neq 0$ for the calculation of the free distance because $wt(z^du(z)G(z))=wt(u(z)G(z))$ for any $d\in\mathbb N$.

For estimating the free distance we will split $\mathcal{G}$ into several matrices in different ways. If \bm{$\mu \geq 3$} we can distinguish three cases:

\textbf{Case 1: $\ell\geq\mu-1, \ell> 0$}
\begin{align*}
    wt(v(z))&\geq wt\left((u_0\ \cdots\ u_{\mu-1})\begin{pmatrix} G_0 & \cdots & G_{\mu-1}\\ & \ddots & \vdots\\ 0 & & G_0\end{pmatrix}\right)\\&+  wt\left((u_{\ell-\mu+1}\cdots u_{\ell})\begin{pmatrix}G_{\mu} & & 0\\ \vdots & \ddots &\\ G_1 & \cdots & G_{\mu} \end{pmatrix}\right)
\end{align*}
Through row and column permutations the matrix
$\begin{pmatrix}G_{\mu} & & 0\\ \vdots & \ddots &\\ G_1 & \cdots & G_{\mu} \end{pmatrix}$ can be transformed into $\bar{G}_{\mu-1}^c$ and therefore all full-size minors of both matrices are nonzero.
Note that $L \geq \mu$ for $k\mid \delta$. \\
Because all non trivially zero full-size minors of $G_{\mu-1}^c=\begin{pmatrix}
G_0 & \cdots & G_{\mu-1}\\
 & \ddots & \vdots\\
0 &  & G_0
\end{pmatrix}$ and 
$\bar{G}_{\mu-1}^c=\begin{pmatrix}
G_{\mu} & \cdots & G_{1}\\
 & \ddots & \vdots\\
0 &  & G_{\mu}\end{pmatrix}$ are nonzero, which implies that $G_0$ and $G_{\mu}$ are full rank, and we can apply Theorem \ref{mdpcrit} to $\mathcal{C}$ and $\overline{\mathcal{C}}$, respectively.

As $u_0\neq 0\neq u_{\ell}$, we can conclude:

$$
wt(v(z))\geq 2\big((n-k)\mu+1\big)= (n-k)\frac{\delta}{k}+1+(n-k)\frac{\delta}{k}+1.
$$

Therefore $wt(v(z))\geq (n-k)\Big(\frac{\delta}{k}+1\Big)+\delta+1= (n-k)\frac{\delta}{k}+(n-k)+\delta+1$ if and only if $(n-k)\frac{\delta}{k}+1 \geq (n-k)+\delta \Leftrightarrow (n-k)\Big(\frac{\delta-k}{k}\Big) \geq \delta -1 
\Leftrightarrow n \geq \frac{(\delta-1)k}{(\frac{\delta}{k}-1)k}+k=\frac{\delta-1}{\mu -1}+k$.





\textbf{Case 2: $1\leq \ell<\mu-1$}

In this case, we have
\begin{align*}
    &wt(v(z))\geq wt\left((u_0\ \cdots\ u_{\ell-1})\begin{pmatrix} G_0 & \cdots & G_{\ell-1}\\ & \ddots & \vdots\\ 0 & & G_0\end{pmatrix}\right)\\
&+wt\left((u_0\ \cdots\ u_{\ell})\begin{pmatrix}
G_{\ell} & \cdots & G_{\mu}\\
\vdots  &   & \vdots\\
G_0 & \cdots & G_{\mu-\ell}
\end{pmatrix}\right)+wt\left((u_{1}\ \cdots\ u_{\ell})\begin{pmatrix}G_{\mu} & & 0\\ \vdots & \ddots &\\ G_{\mu-\ell+1} & \cdots & G_{\mu} \end{pmatrix}\right).
\nonumber\end{align*}
Applying Lemma \ref{le2}, we obtain 
$$wt\left((u_0\ \cdots\ u_{\ell})\begin{pmatrix}
G_{\ell} & \cdots & G_{\mu}\\
\vdots  &   & \vdots\\
G_0 & \cdots & G_{\mu-\ell}
\end{pmatrix}\right)\geq n(\mu-\ell+1)-k(\ell+1)+1.$$
Using Lemma \ref{maxim}, one gets as in the previous case that all nontrivial full-size minors of $G_{\ell-1}^c$ and 
$\bar{G}_{\ell-1}^c$ are nonzero. Finally,
\begin{align*}
wt(v(z))&\geq 2\big((n-k)\ell+1\big)+n(\mu-\ell+1)-k(\ell+1)+1\\ &= 2\big((n-k)\ell+1\big) + \frac{n\delta}{k}-n\ell+n-k\ell-k+1.
\end{align*}
Therefore, $wt(v(z))\geq \frac{n\delta}{k}+n-k+1$ if $2\big((n-k)\ell+1\big) -n\ell-k\ell \geq 0  \Leftrightarrow 2n\ell-2k\ell+2-n\ell-k\ell \geq 0 \Leftrightarrow n \ell \geq 2k\ell -2 +k\ell\Leftrightarrow n \geq 3 k - \frac{2}{\ell}$.


If $n(\mu-\ell+1)-k(\ell+1)+1\leq 0$, the middle part of the weight and thus the condition on the matrix depending on $\ell$ is not needed. This explains the upper bound on $\ell$ in the assumptions of the theorem.

\textbf{Case 3: $\ell=0$}

In this case, $wt(v(z))=wt(u_0 (G_0\cdots G_{\mu}))\geq n(\mu+1)-k+1 = \frac{n \delta}{k}+n-k+1$.\\





In sum, if $\mu \geq 3$ we need $n \geq \frac{\delta-1}{\mu -1}+k$ for Case 1 and $n \geq 3k- \frac{2}{\ell}$ for Case 2. In order to be able to compare the two bounds obtained, note that for Case 2 when $\ell = \mu -2$ we obtain the largest bound. So we can rewrite the second bound as being $3k-\frac{2}{\mu-2}=3k-\frac{2}{\frac{\delta}{k}-2} = 3k - \frac{2k}{\delta-2k}.$

Consequently, comparing $\frac{(\delta-1)k}{\delta-k}+k$ and $3k- \frac{2k}{\delta - 2k}$ we observe that the second bound is always bigger than the first except for case $k=1$ and $\delta = 3$ and for the case $k=2$ and $\delta = 6$. Since we have to satisfy both cases simultaneously, we have $3k- \frac{2k}{\delta - 2k}$, except for the case $k=1$ and $\delta=3$ where $n \geq 2$ and for the case $k=2$ and $\delta = 6$ where $n \geq 9/2$.

For $\mu\leq 2$, we obtain a higher weight, i.e. smaller lower bound on $n$, if we put $\ell=\mu-1$ to Case 2 instead of Case 1. Also for $\mu\in\{0,1\}$ not all of the above casee exist.

If \bm{$\mu = 0$}, then $ \mathcal{G} = G_0$ and therefore the only relevant case is when $\ell = 0$. Note that if $\mu = 0$ then $\delta = 0$ and then the Singleton bound can be rewritten as $n-k+1$. Additionally, we easily obtain that $wt(v(z))=wt(u_0G_0)\geq n-k+1$ without any restriction on the parameters.


If \bm{$\mu = 1$}, then $\delta = k$ and the Singleton bound can be rewritten as $2n-k+1$. We can distinguish two cases:\\
\textbf{Case 1:} $\ell \geq \mu$\\
$wt(v(z))\geq wt\left(\left((u_0\ u_{1})\begin{pmatrix}
G_{0} & G_{1}\\
0 & G_{0}
\end{pmatrix}\right)\right) + wt(u_1G_1) \geq 2(n-k)+1+n-k+1 = 3n-3k+2\geq 2n-k+1$ if $n \geq 2k - 1.$\\
\textbf{Case 2:} $\ell=0$\\
In this case $ \mathcal{G} = (G_0 \ G_1)$ and we easily obtain that $wt(v(z))=wt(u_0(G_0 \ G_1))= 2n-k+1.$\\
In sum, if $\mu = 1$ we need $n \geq 2k-1$.

If \bm{$\mu = 2$}, then $\delta = 2k$ and the Singleton bound can be rewritten as $3n-k+1$. We can distinguish three cases:\\
\textbf{Case 1:} $\ell \geq \mu$
\begin{align*}
    wt(v(z))&\geq wt\left((u_0\ u_{1}\ u_2)\begin{pmatrix}
G_{0} & G_{1} & G_2\\
0 & G_{0} & G_1\\
0 & 0 & G_0
\end{pmatrix}\right) + wt\left((u_{\ell-1} \ u_{\ell})\begin{pmatrix}
G_{2} & 0\\
G_{1} & G_2
\end{pmatrix}\right)\\
&\geq 3 (n-k)+1+2(n-k)+1 = 5n-5k+2 \geq 3n-k + 1
\end{align*}
 if $n \geq 2k-\frac{1}{2}$.

\textbf{Case 2:} $0 < \ell \leq \mu -1$\\
In this case $\mu = 2$ implies $\ell = 1$ and $ \mathcal{G}=\begin{pmatrix}
G_{0} & G_{1} & G_2 & 0\\
0 & G_{0} & G_{1} & G_2
\end{pmatrix}$. Therefore, $wt(v(z))=wt(u_0G_0) + wt\left((u_0\ u_{1})\begin{pmatrix}
G_{1} & G_{2}\\
G_{0} & G_{1}
\end{pmatrix}\right) + wt(u_1G_2) \geq (n-k+1)+(2n-2k+1)+(n-k+1) = 4n-4k+3 \geq 3n-k + 1$ if $n \geq 3k-2$.

\textbf{Case 3:} $\ell=0$\\
In this case $ \mathcal{G} = (G_0 \ G_1 \ G_2)$ and we easily obtain that\\ $wt(v(z))=wt(u_0(G_0 \ G_1 \ G_2))\geq 3n-k+1.$

In sum, if $\mu = 2$, we need $n \geq  2k-\frac{1}{2}$ and $n \geq 3k-2$. Therefore it is enough to have $n \geq 3k-2$, except for $k = 1$ where $n \geq2 $.
\end{proof}

\begin{remark}
We have $L \geq \mu$ if $k\mid \delta$, i.e. codes fulfilling the conditions of the above theorem are not only MDS but also reach the upper bound for the $j$-th column distance and the $j$-th reverse column distance until $j=\mu-1$.
\end{remark}

\subsection{Criteria for MDS convolutional codes with $k\nmid\delta$}

\begin{theorem}\label{main2}
Let $n,k,\delta\in\mathbb N$ with $n>k\nmid\delta$ and let $\mathcal{C}$ be an $(n,k,\delta)$ convolutional code over $\mathbb F_q$ with minimal generator matrix $G(z)$ of degree $\mu=\lceil\frac{\delta}{k}\rceil$ and with generic row degrees. Denote by $\tilde{G}_{\mu}$ the matrix consisting of the (first) $t=\delta+k-k\mu$ nonzero rows of $G_{\mu}$. 
Moreover, set
\begin{align*}
B_1&=t-1+\frac{\mu-1}{2}k+\frac{\delta}{\mu}\ \text{for $\mu\geq 2$ and $B_1=0$ otherwise}\\
B_2&=-1+\frac{\mu}{2}k+\frac{\delta}{\mu-1}\ \text{for $\mu\geq 2$ and $B_2=0$ otherwise}\\
B_3&=\left(2-\frac{\mu}{2}\right)k+\delta-1-\frac{\mu k+k-\delta+1}{\mu-1}\\
B_4&=-1+\left(2+\frac{\mu-1}{2}\right)k-\frac{1}{\mu-2}\\
B_5&=t-1+(1-\mu)k+\delta=2(\delta+k(1-\mu))-1\\
B&=\max(B_1,B_2,B_3,B_4,B_5)
\end{align*}

If all not trivially zero full-size minors of the matrices
\begin{align*}
 &\begin{pmatrix} G_0 & \cdots & G_{\mu-1}\\ & \ddots & \vdots\\ 0 & & G_0\end{pmatrix}\  \text{and}\  \begin{pmatrix}
G_{\ell} & \cdots & G_{\mu-1}\\
\vdots  &   & \vdots\\
G_0 & \cdots & G_{\mu-1-\ell}
\end{pmatrix} \ \text{for}\ \ 0\leq \ell<\mu-1\\
&\text{and}\ \begin{pmatrix}
    \tilde{G}_{\mu}\\ G_{\mu-1}\\  \vdots\\ G_i
\end{pmatrix}  \   \text{for}\ \ 0< i\leq\mu-1\ \text{s.t.}\ n\geq k(\mu-i+1)
\  \text{and}\ \tilde{G}_{\mu}
\end{align*}
are nonzero and $n\geq B$, then $\mathcal{C}$ is MDS.
\end{theorem}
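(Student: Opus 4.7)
The argument mirrors the proof of Theorem~\ref{main}, but two structural departures are forced by $k\nmid\delta$: the leading coefficient $G_\mu$ has only $t=\delta+k-k\mu<k$ nonzero rows (the submatrix $\tilde G_\mu$), and the reverse code no longer satisfies $\overline G_i=G_{\mu-i}$, so the ``reverse column distance'' half of the earlier proof must be replaced by a direct Lemma~\ref{le2} argument on the stacked matrices obtained by placing $\tilde G_\mu$ on top of $G_{\mu-1},\dots,G_i$ (call these $M_i$). I fix a nonzero codeword $v(z)=u(z)G(z)$ with $u(z)=\sum_{j=0}^{\ell}u_jz^j$; shifting by $z^d$ does not change weight, so I may assume $u_0\neq 0$ (and $u_\ell\neq 0$ is automatic from $\deg u=\ell$). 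Writing $v_s=\sum_{j=0}^{\mu}u_{s-j}G_j$, I use $wt(v(z))=\sum_s wt(v_s)$ and bound the total weight by splitting the time axis into an early, a middle, and a late range.

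\textbf{The three ranges.} In the early range $0\le s\le\mu-1$, whenever $\ell\ge\mu-1$ one has $(v_0,\dots,v_{\mu-1})=(u_0,\dots,u_{\mu-1})G_{\mu-1}^c$, and the hypothesis on $G_{\mu-1}^c$ together with Theorem~\ref{mdpcrit} yields weight $\ge(n-k)\mu+1$; for $0<\ell<\mu-1$ one uses $G_\ell^c$ and Lemma~\ref{maxim} instead. In the late range $\ell<s\le\ell+\mu$, the block
\[
v_{\ell+i}\;=\;\tilde u_{\ell+i-\mu}\tilde G_\mu+u_{\ell+i-\mu+1}G_{\mu-1}+\cdots+u_\ell G_i
\]
(with $u_j:=0$ for $j<0$ and $\tilde u_j$ the first $t$ coordinates of $u_j$) is a linear combination of the rows of $M_i$; because $u_\ell\neq 0$ it is nontrivial, and Lemma~\ref{le2} (together with the dimension condition $n\ge k(\mu-i+1)$) supplies $wt(v_{\ell+i})\ge n-t-k(\mu-i)+1$. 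Summing over $i=1,\dots,\mu$ and combining with the early-range bound, then requiring the total to meet the Singleton bound $(n-k)\mu+\delta+1$, produces the threshold $B_1$ after elementary algebra. The companion subcase $\tilde u_\ell=0$---where the predictable degree property forces $\deg v=\ell+\mu-1$---drops the $i=\mu$ summand and shifts the top contributing row from $\tilde G_\mu$ to $G_{\mu-1}$, and the analogous accounting yields $B_2$. When $0<\ell<\mu-1$ the middle range is nonempty and the blocks $v_s$ with $\ell\le s\le\mu-1$ are governed by the second family of hypothesised matrices stacking $(G_\ell\,\cdots\,G_{\mu-1})$ down to $(G_0\,\cdots\,G_{\mu-1-\ell})$; applying Lemma~\ref{le2} to these and combining with the early- and late-range estimates produces the remaining thresholds $B_3$, $B_4$ (a variant in which the $\tilde G_\mu$ contribution drops out), and $B_5$ (the edge case $\ell=0$). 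Since $n\ge B=\max_iB_i$, every subcase delivers $wt(v(z))\ge(n-k)\mu+\delta+1$, so $\mathcal{C}$ is MDS.

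\textbf{Main obstacle.} The individual weight estimates are routine applications of Theorem~\ref{mdpcrit} and Lemma~\ref{le2}; the real work is in the bookkeeping. One must partition according to (i) the size of $\ell$ relative to $\mu$ and (ii) whether $\tilde u_\ell$---and symmetrically $\tilde u_0$---vanishes, and then in each subcase identify which hypothesised matrix governs each block and which coefficients $u_j$ are forced to be nonzero so that the Lemma~\ref{le2} combinations are nontrivial. Verifying that exactly the matrices listed in the theorem suffice in every subcase, and that the dimension condition $n\ge k(\mu-i+1)$ needed by Lemma~\ref{le2} on each $M_i$ is automatically met once $n\ge B$, is the delicate part of the proof.
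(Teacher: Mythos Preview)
Your proposal is correct and follows the paper's own approach: case split on $\ell$ versus $\mu-1$, sub-split on whether $\tilde u_\ell=u_\ell^{(1)}$ vanishes, early range handled by Theorem~\ref{mdpcrit}, middle and late ranges by Lemma~\ref{le2} on exactly the matrices listed in the hypotheses, yielding the five thresholds $B_1,\dots,B_5$ just as in the paper. One minor correction: for $0<\ell<\mu-1$ the early range should invoke $G_{\ell-1}^c$ (governing $v_0,\dots,v_{\ell-1}$) rather than $G_\ell^c$, since your middle block already accounts for $v_\ell$; and the aside about a sub-case on $\tilde u_0$ is unnecessary, as the forward column-distance matrix $G_{\mu-1}^c$ does not involve $G_\mu$.
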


\begin{proof}
We proceed similar to the proof of Theorem \ref{main} but need to distinguish more cases as some rows of $G_{\mu}$ are equal to zero. Let $u(z)\in\mathbb F_q[z]$ be a message with $\deg(u)=\ell$ and let $v(z)=u(z)G(z)$ and we can assume $u_0\neq 0\neq u_{\ell}$. Using again Lemma \ref{le2} and Theorem \ref{mdpcrit}, we obtain the following.

    \textbf{Case 1: $\ell\geq\mu-1> 0$}
\begin{align*}
    wt(v(z))&\geq wt\left((u_0\ \cdots\ u_{\mu-1})\begin{pmatrix} G_0 & \cdots & G_{\mu-1}\\ & \ddots & \vdots\\ 0 & & G_0\end{pmatrix}\right)+ \\
    &+\sum_{i=1}^{\mu-1} wt\left((u^{(1)}_{\ell-\mu+i}\ u_{\ell-\mu+i+1}\ \cdots\ u_{\ell})\begin{pmatrix}
    \tilde{G}_{\mu}\\ \vdots\\ G_i
\end{pmatrix}\right)+wt(u^{(1)}_{\ell}\tilde{G}_{\mu})
\end{align*}
 
where for $j=0,\hdots,\ell$, $u_j=[u_j^{(1)}\ u_j^{(2)}]$ with $u_j^{(1)}\in\mathbb F_q^{t}$.

Case 1.1 : $u_{\ell}^{(1)}\neq 0$.
\begin{align*}
    wt(v(z))&\geq \mu(n-k)+1+\sum_{i=1}^{\mu-1}(n-(\mu-i)k-t+1)+n-t+1\\
    &=\mu (n-k)+\delta+1+\mu n-\frac{(\mu-1)\mu}{2}k-\mu t+\mu-\delta\\
    &\geq \mu (n-k)+\delta+1
\end{align*}
if
$$n\geq t-1+\frac{\mu-1}{2}k+\frac{\delta}{\mu}=:B_1.$$

Case 1.2: $u_{\ell}^{(1)}=0$.
\begin{align*}
    wt(v(z))&\geq \mu(n-k)+1+\sum_{i=1}^{\mu-1}(n-(\mu-i)k+1)\\
    &=\mu (n-k)+\delta+1+(\mu-1)n-\frac{(\mu-1)\mu}{2}k+\mu-1-\delta\\
    &\geq \mu (n-k)+\delta+1
\end{align*}
if
$$n\geq -1+\frac{\mu}{2}k+\frac{\delta}{\mu-1}=:B_2.$$


 \textbf{Case 2: $0<\ell<\mu-1$}

\begin{align*}
    wt(v(z))&\geq wt\left((u_0\ \cdots\ u_{\ell-1})\begin{pmatrix} G_0 & \cdots & G_{\ell-1}\\ & \ddots & \vdots\\ 0 & & G_0\end{pmatrix}\right)\\
&+wt\left((u_0\ \cdots\ u_{\ell})\begin{pmatrix}
G_{\ell} & \cdots & G_{\mu-1}\\
\vdots  &   & \vdots\\
G_0 & \cdots & G_{\mu-\ell-1}
\end{pmatrix}\right)\nonumber\\
&+\sum_{i=\mu-\ell}^{\mu-1} wt\left((u^{(1)}_{\ell-\mu+i}\cdots u_{\ell})\begin{pmatrix}
    \tilde{G}_{\mu}\\ \vdots\\ G_i
\end{pmatrix}\right)+wt(u^{(1)}_{\ell}\tilde{G}_{\mu})
\nonumber\end{align*}

Case 2.1 : $u_{\ell}^{(1)}\neq 0$
\begin{align*}
    &wt(v(z))\geq \ell(n-k)+1+n(\mu-\ell)-k(\ell+1)+1+\sum_{i=1}^{\ell+1}(n-(i-1)k-t+1)\\
    &=\mu (n-k)+\delta+1+(\ell+1)n-\left(2\ell+1+\frac{(\ell+1)\ell}{2}-\mu\right)k+\ell+2-t(\ell+1)-\delta\\
    &\geq \mu (n-k)+\delta+1
\end{align*}
if
$$n\geq t-1+\left(2-\frac{\mu+1}{\ell+1}+\frac{\ell}{2}\right)k+\frac{\delta-1}{\ell+1}.$$
Note that this lower bound on $n$ is strictly monotonic increasing with $\ell$ and thus,
\begin{align*}
&B_3:=\max_{\ell\in\{1,...,\mu-2\}}t-1+\left(2-\frac{\mu+1}{\ell+1}+\frac{\ell}{2}\right)k+\frac{\delta-1}{\ell+1}\\
&=t-1+\left(2-\frac{\mu+1}{\mu-1}+\frac{\mu-2}{2}\right)k+\frac{\delta-1}{\mu-1}
=\left(2-\frac{\mu}{2}\right)k+\delta-1-\frac{\mu k+k-\delta+1}{\mu-1}
\end{align*}

Case 2.2: $u_{\ell}^{(1)}=0$
\begin{align*}
    wt(v(z))&\geq \ell(n-k)+1+n(\mu-\ell)-k(\ell+1)+t+1+\sum_{i=2}^{\ell+1}(n-(i-1)k+1)\\
    &=\mu (n-k)+\delta+1+\ell n-\left(2\ell+1+\frac{(\ell+1)\ell}{2}-\mu\right)k+\ell+1+t-\delta\\
    &\geq \mu (n-k)+\delta+1
\end{align*}
if
$$n\geq -1+\left(2-\frac{\mu-1}{\ell}+\frac{\ell+1}{2}\right)k+\frac{\delta-1-t}{\ell}.$$
Also this lower bound on $n$ is strictly monotonic increasing with $\ell$ and thus,
\begin{align*}
    B_4&:=\max_{\ell\in\{1,...,\mu-2\}}-1+\left(2-\frac{\mu-1}{\ell}+\frac{\ell+1}{2}\right)k+\frac{\delta-1-t}{\ell}\\
    &=-1+\left(2+\frac{\mu-1}{2}\right)k-\frac{1}{\mu-2}
\end{align*}

 \textbf{Case 3: $\ell=0$}
\begin{align}
    wt(v(z))&= wt(u_0\begin{pmatrix} G_0 & \cdots & G_{\mu-1}\end{pmatrix})
+wt(u^{(1)}_{0}\tilde{G}_{\mu})
\nonumber\end{align}

Case 3.1 : $u_{\ell}^{(1)}\neq 0$
\begin{align*}
    wt(v(z))&\geq n\mu-k+1+n-t+1\\
    &=\mu (n-k)+\delta+1+n-(1-\mu)k-t+1-\delta\\
    &\geq \mu (n-k)+\delta+1
\end{align*}
if
$$n\geq t-1+(1-\mu)k+\delta=2(\delta+k(1-\mu))-1=:B_5.$$

Case 3.2: $u_{\ell}^{(1)}=0$
\begin{align*}
    wt(v(z))&\geq n\mu-k+t+1=\mu (n-k)+\delta+1
\end{align*}

Note that $k\nmid\delta$ implies $\delta>0$ and hence $\mu>0$. If $\mu\leq 2$, Case 2 does not exist and if $\mu=1$, additionally, Case 1 does not exist because Case 3 gives the condition that $G_0$ and $\tilde{G}_1$ have all full-size minors nonzero and thus, this needs to give enough weight also for $\ell>0$ (that $G_0$ and $\tilde{G}_{\mu}$ have all full-size minors nonzero is part of the conditions for all $\ell$).
\end{proof}

\begin{remark}
We have $L \geq \mu-1$ if $k\nmid \delta$, i.e. codes fulfilling the conditions of the above theorem reach the upper bound for the $j$-th column distance until $j=\mu-1$. Moreover, $L =\mu-1$ if and only if $n>\delta+k=k\mu+t$. In this case $\mathcal{C}$ is also MDP (and all matrices in the above theorem have more columns than rows).
\end{remark}



\section{Releasing conditions as far as possible to get a field size as small as possible}

The mentioned theorems of \cite{mdssr} use similar ideas to the ones of the preceding section but always split $\mathcal{G}$ fully into blocks of columns consisting of $n$ columns each (except for $\ell=0$ in which case this is not possible). This has the effect that these codes do not have good column distances as for each $\ell$, $G^c_{\ell}$ is split into blocks consisting of $n$ columns each. Moreover, each additional splitting of $\mathcal{G}$ increases the bound on $n$, i.e. the necessary rate. However, if $n,k,\delta$ are given and $n$ is large enough to allow to split $\mathcal{G}$ more than in Theorem \ref{main} or Theorem \ref{main2}, respectively, further splittings release the conditions and therefore allow constructions over fields of smaller size.\\
Given $n,k,\delta$ fulfilling the condition of Theorem \ref{main} or Theorem \ref{main2}, we want to find the optimal way to split the matrix $\mathcal{G}$. Each splitting releases the condition to get an MDS code, i.e. decreases the field size but it also decreases the weight of some codewords and we have to ensure that the code stays MDS. Moreover, if further splitting of $\mathcal{G}$ is possible, there might be different possible ways to split it. We want to do the splitting in such a way that we give priority to splittings with good column distances. If $k\mid\delta$, we can also achieve good reverse column distances with our method.\\
Denote by $S$ the value of the generalized Singleton bound. 

\subsection{Conditions for $k\mid\delta$}

First calculate 
$$W:=\left\lceil\frac{S-2}{n-k}\right\rceil=\left\lfloor\frac{\delta}{k}\right\rfloor+1+\left\lceil\frac{\delta-1}{n-k}\right\rceil\quad\text{and}\quad E:=\left\lceil\frac{W}{2}\right\rceil-1\quad\text{and}\quad F:=\left\lfloor\frac{W}{2}\right\rfloor-1.$$
Note that we assumed that $n,k,\delta$ fulfill the conditions of Theorem \ref{main} and hence, $G^c_{\mu}$ and $\bar{G}^c_{\mu-1}$ give rise to enough weight to reach the Singleton bound. Thus, $E+F+2=W\leq 2\mu+1$ and the largest values $E$ and $F$ can have are $E=\mu$ and $F=\mu-1$.
If all non-trivially zero full-size minors of $G_{E}^c$ and $\bar{G}_{F}^c$ are nonzero, then $wt(u(z)G(z))\geq S+R$ for all $u(z)\in\mathbb F[z]^k$ with $\ell=\deg(u)\geq E$, where $R=W(n-k)-(S-2)\in\{0,\hdots,n-k+1\}$. If $R\geq F\cdot k-1$, we can split $\bar{G}_{F}^c$ into $\bar{G}_{F-1}^c$ and $\begin{pmatrix}
    G_{\mu-F}\\ \vdots \\ G_{\mu}
\end{pmatrix}$ and if $R-F\cdot k+1\geq E\cdot k-1$, we can additionally split $G_{E}^c$ into $G_{E-1}^c$ and $\begin{pmatrix}
    G_E\\ \vdots\\ G_0
\end{pmatrix}$.
If such a splitting is possible, by the choice of $W,E,F$, the part of the weight arising from $\begin{pmatrix}
  G_{\mu}\\ \vdots \\ G_{\mu-F}
\end{pmatrix}$, i.e. $wt\left((u_{\ell-F}\ \cdots\ u_{\ell})\begin{pmatrix}
  G_{\mu}\\ \vdots \\ G_{\mu-F}
\end{pmatrix}\right)$, is necessary, which implies $n\geq (F+1)k$.

It remains to consider the case that $\ell<E$.

If $\ell<F\leq\mu-1$, we write, using the splitting of Case 2 in Theorem \ref{main},
$$wt(v(z))=S+A$$
 with $A:=n\ell-3k\ell+2$
and consider the following cases:




If $A\geq k$, we can change the splitting of $\mathcal{G}$ to $G_{\ell}^c$, $\begin{pmatrix}
    G_{\ell+1} & \cdots & G_{\mu}\\
    \vdots & & \vdots\\
    G_1 & \cdots & G_{\mu-\ell}
\end{pmatrix}$, $\bar{G}_{\ell-1}^c$. If even $A\geq 2k$, we can consider the splitting $G_{\ell}^c$, $\begin{pmatrix}
    G_{\ell+1} & \cdots & G_{\mu-1}\\
    \vdots & & \vdots\\
    G_1 & \cdots & G_{\mu-\ell-1}
\end{pmatrix}$, $\bar{G}_{\ell}^c$. Finally, if we still have some weight left, we can split the middle matrix. With each splitting, we loose weight $(\ell+1)k-1$ if all occurring matrices have at least as many columns as rows, otherwise we can leave away the matrices with fewer columns than rows and might loose less than weight $(\ell+1)k-1$ per splitting.\\
Hence, one can split the middle matrix $x$ times, where 
$$x=\min\left(\mu-\ell-2, \left\lfloor\frac{A-2k}{(\ell+1)k-1}\right\rfloor\right).$$
If $x=\mu-\ell-2$, i.e. we are able to split the middle part fully into blocks consisting of $n$ columns each, we can delete up to $y$ matrices (with $n$ columns each) and do not need to consider their full-size minors, where 
$$y=\min\left(\mu-\ell-1, \left\lfloor\frac{A-2k-(\mu-\ell-2)((\ell+1)k-1)}{n-(\ell+1)k+1}\right\rfloor\right).$$
If $\ell=F=E-1$ and $\mu\geq 3$, then $\ell=\mu-1$ belongs to Case 1 in the proof of Theorem \ref{main} and hence we know $E\leq\mu-1$, i.e. $\ell\leq\mu-2$, and can proceed in the following way: If $\bar{G}_{F}^c$ has not been split in a previous step, we can proceed exactly as before. If it has been split before, we first see if the splitting $G_{\ell}^c$, $\begin{pmatrix}
    G_{\ell+1} & \cdots & G_{\mu-1}\\
    \vdots & & \vdots\\
    G_1 & \cdots & G_{\mu-\ell-1}
\end{pmatrix}$, $\bar{G}_{\ell}^c$ is possible, then if we can additionally split $\bar{G}_{\ell}^c=\bar{G}_{F}^c$ into $\bar{G}_{F-1}^c$ and $\begin{pmatrix}
    G_{\mu-F}\\ \vdots \\ G_{\mu}
\end{pmatrix}$ and finally, we investigate, which splitting of the middle part is possible.

If $\ell=F=E-1$ and $\mu\in\{0,1\}$, the case $0<\ell\leq\mu-1$ does not exist and hence, one only has Case 1 and Case 3. For Case 3, one has to determine $E$ and $F$, see if we can split $\bar{G}_{F}^c$ into $\bar{G}_{F-1}^c$ and $\begin{pmatrix}
    G_{\mu-F}\\ \vdots \\ G_{\mu}
\end{pmatrix}$ and if we can split $G_{E}^c$ into $G_{E-1}^c$ and $\begin{pmatrix}
    G_E\\ \vdots\\ G_0
\end{pmatrix}$. For Case 3, there is nothing to do as (for $k\mid\delta$) the matrix $[G_0\ \cdots\ G_{\mu}]$ gives exactly the weight of the Singleton bound and can never be split.

If $\ell=F=E-1$ and $\mu=2$, we have to additionally consider the case $\ell=1=F=E-1$, in which we start with the splitting $\begin{pmatrix} G_0 & \vline & G_1 & G_2 & \vline & \\
& \vline & G_0 & G_1 & \vline & G_2
\end{pmatrix}$. And with the given parameters it is clear that we have to stay with this splitting as $\begin{pmatrix} G_0 &  G_1 & \vline &  G_2 & \\
&  G_0 & \vline &G_1 &  G_2
\end{pmatrix}$ would imply $E=1$, but in the considered case one has $E=2$.

We illustrate this procedure with an example.

\begin{example}
    Let $k=2$, $n=11$, $\delta=6$, i.e. $\mu=3$ and $S=43$. We calculate $W=5$, i.e. $E=2$, $F=1$. Then, $R=4\geq Fk-1+Ek-1$, i.e. from the case $\ell\geq E$ we obtain the matrices 
    $$\begin{pmatrix}
        G_0 & G_1\\ 0 & G_0
    \end{pmatrix}, \ \begin{pmatrix}
        G_2\\ G_1\\ G_0
    \end{pmatrix},\ \begin{pmatrix}
        G_2\\ G_3
    \end{pmatrix},\ G_3.$$
    For $\ell=1=F=E-1$, we start with the splitting $G_0$, $\begin{pmatrix}
        G_1 & G_2 & G_3\\ G_0 & G_1 & G_2
    \end{pmatrix}$, $G_3$. Since $A=7\geq 2k$, we can change to the splitting  $\begin{pmatrix}
        G_0 & G_1\\ 0 & G_0
    \end{pmatrix}, \ \begin{pmatrix}
        G_2\\ G_1
    \end{pmatrix},\ \begin{pmatrix}
      G_3 & 0\\  G_2 &  G_3
    \end{pmatrix}$ and since $A-2k=4\geq Fk-1$, we can obtain the splitting
    $\begin{pmatrix}
        G_0 & G_1\\ 0 & G_0
    \end{pmatrix}, \ \begin{pmatrix}
        G_2\\ G_1
    \end{pmatrix},\  \begin{pmatrix}
        G_2\\ G_3
    \end{pmatrix},\ G_3$.
    As $\mu-\ell-2=0$, we get $x=0$, i.e. we cannot split the middle part $\begin{pmatrix}
        G_2\\ G_1
    \end{pmatrix}$ further, which is obvious since it already consists only of one block. As $A-2k-(Fk-1)=3<8=n-(\ell+1)k+1$, also $y=0$ and the middle part $\begin{pmatrix}
        G_2\\ G_1
    \end{pmatrix}$ cannot be left away. Also including the case that $\ell=0$, one obtains in sum that the non trivially zero full-size minors of the following matrices have to be nonzero:
     $$\begin{pmatrix}
        G_0 & G_1\\ 0 & G_0
    \end{pmatrix}, \ \begin{pmatrix}
        G_2\\ G_1\\ G_0
    \end{pmatrix},\ \begin{pmatrix}
        G_2\\ G_3
    \end{pmatrix}, \begin{pmatrix}
        G_2\\ G_1
    \end{pmatrix},\ [G_0\ G_1\ G_2\ G_3].$$
    Note that we can omit $G_3$ as its full-size minors are part of the full-size minors of $[G_0\ G_1\ G_2\ G_3]$.
\end{example}

\subsection{Conditions for $k\nmid\delta$}

First calculate $D$ maximal such that $n\geq Dk+t$ and then $E$ minimal such that 
\begin{align*}
    (n-k)(E+1)+1+\sum_{i=1}^{\min(E+1,D+1)}(n-(i-1)k-t+1)\geq S
\end{align*}
If $D\geq E$, choose $F$ minimal such that 
\begin{align*}
    (n-k)(E+1)+1+\sum_{i=1}^{F+1}(n-(i-1)k-t+1)\geq S
\end{align*}

If $D<E$, set $F:=D$.

If $\ell\geq E$, we have the splitting $G_E^c$, $\begin{pmatrix}
    \tilde{G}_{\mu}\\  G_{\mu-1}\\  \vdots \\ G_{\mu-F}
\end{pmatrix},\hdots,   \tilde{G}_{\mu}$ and additionally check if it is possible to split $G_{E}^c$ into $G_{E-1}^c$ and $\begin{pmatrix}
    G_E\\ \vdots\\ G_0
\end{pmatrix}$ and still have enough weight. Note that if such a splitting is possible, this implies $n\geq (E+1)k$ by the definition of $E$. This then already determines the optimal splitting.

Let 
\begin{align*}
    A_1&=(\ell+1)n-\left(2\ell+1+\frac{(\ell+1)\ell}{2}-\mu\right)k+\ell+2-t(\ell+1)-\delta\\
    A_2&= \ell n-\left(2\ell+1+\frac{(\ell+1)\ell}{2}-\mu\right)k+\ell+1+t-\delta
\end{align*}
Note that $A_1\geq A_2$ if and only if $n\geq t(\ell+2)-1$.

If $\ell<E$,
we start with the splitting $G_{\ell-1}^c$, $\begin{pmatrix}
G_{\ell} & \cdots & G_{\mu-1}\\
\vdots  &   & \vdots\\
G_0 & \cdots & G_{\mu-\ell-1}
\end{pmatrix}$,\\  $\begin{pmatrix}
    \tilde{G}_{\mu}\\ \vdots \\ G_{\mu-\min(D,\ell)}
\end{pmatrix},\hdots,   \tilde{G}_{\mu}$. If $A_1\geq k$ and $A_2\geq t$, we can change it to the splitting $G_{\ell}^c$, $\begin{pmatrix}
G_{\ell+1} & \cdots & G_{\mu-1}\\
\vdots  &   & \vdots\\
G_1 & \cdots & G_{\mu-\ell-1}
\end{pmatrix}$,  $\begin{pmatrix}
    \tilde{G}_{\mu}\\ \vdots \\ G_{\mu-\min(D,\ell)}
\end{pmatrix},\hdots,  \tilde{G}_{\mu}$. Afterwards, we can split the matrix $\begin{pmatrix}
G_{\ell+1} & \cdots & G_{\mu-1}\\
\vdots  &   & \vdots\\
G_1 & \cdots & G_{\mu-\ell-1}
\end{pmatrix}$ into 
$$x=\min\left(\mu-\ell-2, \left\lfloor\frac{A_1-k}{(\ell+1)k-1}\right\rfloor, \left\lfloor\frac{A_2-t}{\ell k+t-1}\right\rfloor\right)$$
parts.

If $x=\mu-\ell-2$, i.e. we are able to split the middle part fully into blocks consisting of $n$ columns each, we can delete up to $y$ matrices (with $n$ columns each) and do not need to consider their full-size minors, where
\begin{align*}
y&= \min\left(\mu-\ell-1, C_1, C_2\right)\\
    C_1&=\left\lfloor\frac{A_1-k-(\mu-\ell-2)((\ell+1)k-1)}{n-(\ell+1)k+1}\right\rfloor\\
    C_2&=\left\lfloor\frac{A_2-t-(\mu-\ell-2)(\ell k+t-1)}{n-\ell k-t+1}\right\rfloor
\end{align*}
If $F<\min(D,\ell)$, one can additionally check how many of the matrices $$\begin{pmatrix}
    \tilde{G}_{\mu}\\ \vdots \\ G_{\mu-\min(D,\ell)}
\end{pmatrix}, \hdots, \begin{pmatrix}
    \tilde{G}_{\mu}\\ \vdots \\ G_{\mu-F-1}
\end{pmatrix}$$ can be removed.

\begin{remark}
 In all cases, we only considered splittings where we do not split inside the coefficient matrices, i.e. we do only split into blocks consisting of at least $n$ columns. In principle, one might also be able to split more if $n$ is very large but covering also this would make the above description even more technically complicated and applies anyway only to very large (i.e. unpractical) rates.
\end{remark}


\section{Construction of MDS convolutional codes}

In this section, we apply the criteria of the preceding sections for the construction of MDS convolutional codes. In the first subsection, we present general constructions fulfilling the criteria of Theorem \ref{main} and Theorem \ref{main2}, in the second subsection, we give some concrete construction examples over small finite fields.

\subsection{General constructions}

If $k\mid\delta$, to find constructions for generator matrices that fulfill the properties of Theorem \ref{main}, we can make use of so-called reverse superregular matrices as defined in the following.

\begin{definition}
Let $r,n,m\in\mathbb N$ and consider a Toeplitz matrix $A\in\mathbb F_q^{(r+1)n\times(r+1)m}$ of the form $A=\begin{pmatrix}A_0 & \cdots & A_r\\ & \ddots & \vdots\\ 0 & & A_0\end{pmatrix}$ with $A_i\in\mathbb F_q^{n\times m}$ for $i\in\{0,\hdots,r\}$.  $A$ is called \textbf{(reverse) superregular Toeplitz matrix} if all non trivially zero minors (of any size) of the matices $A$ and $A_{rev}=\begin{pmatrix}A_r & \cdots & A_0\\ & \ddots & \vdots\\ 0 & & A_r\end{pmatrix}$ are nonzero.
\end{definition}

As all the matrices of the form $\begin{pmatrix}
G_{\ell} & \cdots & G_{\mu}\\
\vdots  &   & \vdots\\
G_0 & \cdots & G_{\mu-\ell}
\end{pmatrix}$ with $0\leq\ell\leq\mu$ are submatrices of $G_{\mu}^c$, all conditions of Theorem \ref{main} are fulfilled if $G_{\mu}^c$ is a reverse superregular Toeplitz matrix.
Hence, we can use existing constructions for reverse superregular Toeplitz matrices to construct MDS convolutional codes. Such constructions can e.g. be found in \cite{virtuphD} or \cite{li17}. 
These constructions can be modified to be also used for the case that $k\nmid\delta$. In the following theorem we write down such a modified version for one of the constructions of \cite{li17}.
\begin{theorem}\ \\
Let $n,k,\delta\in\mathbb N$ such that they either fulfill the conditions of Theorem \ref{main} or Theorem \ref{main2} and let $\alpha$ be a primitive element of a finite field $\mathbb F=\mathbb F_{p^N}$ with $N>\mu\cdot 2^{(\mu+1)n+t-1}$. Then $G(z)=\sum_{i=0}^{\mu}G_iz^i$ with $G_i=\left[\begin{array}{ccc}
\alpha^{2^{in}} & \hdots & \alpha^{2^{(i+1)n-1}} \\ 
\vdots &  & \vdots \\ 
\alpha^{2^{in+k-1}} & \hdots & \alpha^{2^{(i+1)n+k-2}}
\end{array}\right]$ for $i=0,\hdots,\mu-1$ and $\tilde{G}_{\mu}=\begin{pmatrix}\alpha^{2^{\mu n}} & \hdots & \alpha^{2^{(\mu+1)n-1}} \\ 
\vdots &  & \vdots \\ 
\alpha^{2^{\mu n+t-1}} & \hdots & \alpha^{2^{(\mu+1)n+t-2}}\end{pmatrix}$ is the generator matrix of an MDS convolutional code.
\end{theorem}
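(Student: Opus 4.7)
The plan is to apply Theorem~\ref{main} when $k\mid\delta$ and Theorem~\ref{main2} when $k\nmid\delta$: since the hypotheses on $n,k,\delta$ themselves are built into the statement, the only task is to verify that every non-trivially zero full-size minor of each matrix listed in the appropriate theorem is nonzero over $\mathbb F_{p^N}$.

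I would begin by noting that each such matrix is a submatrix of the single ``master'' matrix
$$\mathcal M=\begin{pmatrix} G_0 & G_1 & \cdots & G_{\mu-1} & \widehat G_{\mu}\\ 0 & G_0 & \cdots & G_{\mu-2} & G_{\mu-1}\\ \vdots & \ddots & \ddots & \vdots & \vdots\\ 0 & 0 & \cdots & 0 & G_0\end{pmatrix},$$
where $\widehat G_\mu$ stands for $G_\mu$ in the divisible case and for $\tilde G_\mu$ otherwise; hence it suffices to show that every non-trivially zero minor of $\mathcal M$ is nonzero in $\mathbb F_{p^N}$. Every nonzero entry of $\mathcal M$ has the shape $\alpha^{2^c}$ with $c\in\{0,1,\dots,(\mu+1)n+t-2\}$.

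The central step is then the following. Given a non-trivially zero $r\times r$ submatrix $M$ of $\mathcal M$, the Leibniz formula gives
$$\det M=\sum_{\sigma\in T}\mathrm{sgn}(\sigma)\,\alpha^{e(\sigma)},\qquad e(\sigma):=\sum_{i=1}^{r}2^{c_{i,\sigma(i)}},$$
with $T$ the nonempty set of permutations whose diagonal product avoids the zero entries. Reading this as $P(\alpha)$ for some $P(X)\in\mathbb F_p[X]$, it is enough to show (i) $\deg P<N$ and (ii) $P\not\equiv 0$, for then $P(\alpha)\neq 0$ follows since the minimal polynomial of the primitive element $\alpha$ has degree $N$. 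Point (i) is a bookkeeping argument: each $e(\sigma)$ is at most $\mu\cdot 2^{(\mu+1)n+t-1}$, which is strictly less than $N$ by hypothesis.

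The main obstacle is (ii), and it is where the exponents of the form $2^c$ play their decisive role. I would follow the strategy used for reverse superregular Toeplitz matrices in \cite{li17}: build a distinguished permutation $\sigma^\star\in T$ greedily, choosing in each row the surviving entry whose exponent is the largest among those sitting in columns not yet used; then exploit the uniqueness of binary expansions to argue that the top bit of $e(\sigma^\star)$ cannot be produced, even after carry-overs, by $e(\sigma)$ for any other $\sigma\in T$. Consequently the monomial $X^{e(\sigma^\star)}$ carries a nonzero coefficient in $P$, forcing $P\neq 0$. The $k\mid\delta$ case is essentially the one already treated in \cite{li17}; the additional ingredient to verify for $k\nmid\delta$ is that nothing breaks when the top block is the truncated $\tilde G_\mu$ with $t<k$ rows, which is immediate because the exponents appearing in $\tilde G_\mu$ dominate those of every other block of $\mathcal M$, so the greedy rule first fills the $\tilde G_\mu$-rows and is then reduced to the divisible case.
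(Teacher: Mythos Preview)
Your overall strategy coincides with the paper's: defer to the greedy/binary-uniqueness argument of \cite{li17} for the nonvanishing of the relevant minors, and supply a field-size estimate. The paper's own proof is one short paragraph doing precisely that.

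There is, however, a genuine gap in your reduction for $k\mid\delta$. You assert that every matrix listed in Theorem~\ref{main} is a submatrix of the single block-Toeplitz matrix $\mathcal M=G_\mu^c$, but the reverse sliding matrix
\[
\bar G_{\mu-1}^c=\begin{pmatrix}G_{\mu} & \cdots & G_{1}\\ & \ddots & \vdots\\ 0 &  & G_{\mu}\end{pmatrix}
\]
is \emph{not} a submatrix of $\mathcal M$: it carries $G_\mu$ on every diagonal block, whereas $G_\mu$ appears only once in $\mathcal M$. Proving that all non-trivially zero minors of $\mathcal M$ are nonzero therefore does not settle the hypothesis on $\bar G_{\mu-1}^c$. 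This is exactly why the paragraph preceding the theorem requires $G_\mu^c$ to be a \emph{reverse} superregular Toeplitz matrix, i.e.\ one needs superregularity of both $G_\mu^c$ and its reverse. The repair is immediate---run the same argument on $\mathcal M_{\mathrm{rev}}$, which is precisely the content of ``reverse superregular'' in \cite{li17}---but it cannot be absorbed into a single master matrix as you wrote. For Theorem~\ref{main2} your reduction is correct as stated, since no reverse matrix appears among its hypotheses.

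A smaller point: the inequality $e(\sigma)\le\mu\cdot 2^{(\mu+1)n+t-1}$ is not quite a one-line count. The crude estimate (number of rows times the maximal single-entry exponent) already overshoots this for the reverse matrix whenever $k>2$, since $\bar G_{\mu-1}^c$ has $\mu k$ rows and entries as large as $\alpha^{2^{(\mu+1)n+k-2}}$. What is actually needed is the block-row refinement the paper carries out: bound the contribution of each block-row by the maximal exponent in a $k\times k$ minor of the single coefficient matrix from which that block-row draws its largest entries, and then sum.
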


\begin{proof}
    The only thing that needs to be done in addition to the proof in \cite{li17} is the calculation of the bound for the field size. To this end, we need to estimate the value of the highest exponent of $\alpha$ occurring in any of the full-size minors considered in Theorem \ref{main2}. The highest exponent of $\alpha$ occurring in any of the full-size minors of $G_i$ is $2^{(i+1)n+k-2k}\sum_{j=0}^{k-1}(2^2)^j<2^{(i+1)n+k-1}$ for $i=0,\hdots,\mu-1$ and $2^{(\mu+1)n+t-2t}\sum_{j=0}^{t-1}(2^2)^j<2^{(\mu+1)n+t-1}$ in any of the full-size minors of $\tilde{G}_{\mu}$. Hence, the highest exponent of $\alpha$ occurring in any of the full-size minors considered in Theorem \ref{main2} is smaller than $\mu\cdot 2^{(\mu+1)n+t-1}$.
\end{proof}

However, the field sizes obtained from these constructions are very large. In the following section, we will present some examples for small code parameters over small finite fields using our new results directly.

\subsection{Construction examples for small code parameters}

In the following, we give some examples for small code parameters, where the criteria of Theorem \ref{main} and Theorem \ref{main2} together with the results of Section 4 allow the construction of MDS convolutional codes over smaller fields than required for already existing constructions (up to our knowledge). We start with some examples where $k\mid\delta$.

\begin{example}
If $k=\delta=1$, i.e. $\mu=1$, according to Theorem \ref{main}, $n$ can be arbitrary, and all non trivially zero full-size minors of $\begin{pmatrix}
    G_0 & G_1 \\ 0 & G_0
\end{pmatrix}$ and $(G_0\ G_1)$ have to be nonzero. 
If we want to release these conditions as far as possible according to the method described in Section 4.1, we first have to calculate $W=2$ (in this special case independent of $n$), i.e. $E=F=0$, telling us that it is indeed enough if all full-size minors, i.e. all entries, of $G_0$ and $G_1$ are nonzero. This means $G_0=G_1=(1\ \cdots\ 1)$ defines an MDS convolutional code over any field. 
\end{example}

\begin{example}
For $k=1$ and $\delta=2$, $n$ can be arbitrary according to Theorem \ref{main} and 
all non trivially zero full-size minors of $(G_0\ G_1\ G_2)$, $\begin{pmatrix}G_0 & G_1 & G_2\\ 0 & G_0 & G_1\\ 0 & 0 & G_0\end{pmatrix}$, $\begin{pmatrix}G_2 & G_1\\ 0 & G_2\end{pmatrix}$ and $\begin{pmatrix}G_1 & G_2\\ G_0 & G_1\end{pmatrix}$ have to be nonzero. According to Section 4.1 we calculate $W=4$ and $E=F=1$, i.e. it is indeed enough if all non trivially zero full-size minors of $(G_0\ G_1\ G_2)$, $\begin{pmatrix}G_0 & G_1\\ 0 & G_0\end{pmatrix}$ and $\begin{pmatrix}G_2 & G_1\\ 0 & G_2\end{pmatrix}$ are nonzero. 
Hence, an $(n,1,2)$ MDS convolutional code exists for 
$q\geq n+1$, e.g. $G_0=G_2=(1\ \cdots\ 1)$ and $G_1=(1\ \alpha\ \cdots \alpha^{n-1})$ where $\alpha$ is a primitive element of $\mathbb F_q$. For $n=2$ this field size is smaller than in existing constructions, for $n\geq 3$ it is equal to the best existing construction \cite{gll06}.
\end{example}


\begin{example}
For $k=1$, $n=\delta=3$, i.e. $\mu=3$  and $S=12$, the best existing constructions require $q\geq 10$ (see \cite{ju75,gll06}).
   Using Theorem \ref{main}, the $(3,1,3)$ convolutional code over 
   $\mathbb{F}_{16}$ with generator matrix $G(z)=\sum_{i=0}^{3}G_iz^i$, such that $G_0= (\alpha^2+1 \ \ 1 \ \ \alpha^3+1)$, $G_1= (\alpha^3+\alpha \ \ \alpha^3+\alpha^2+1 \ \ \alpha^3)$, $G_2= (\alpha^3+\alpha^2+\alpha+1 \ \ \alpha+1 \ \ \alpha^3+\alpha^2+\alpha)$ and $G_3= (\alpha^2+1 \ \  \alpha^3+\alpha^2 \ \ \alpha^2+\alpha+1)$, with $\alpha$ as primitive element of $\mathbb F_{16}$, is an MDS convolutional code.\\
To improve the field size, we proceed as described in Section 4.1 to release the conditions and calculate $W=5$, i.e. $E=2$ and $F=1$ giving rise to the matrices $G_2^c$ and $\overline{G}_1^c$, which lead to weight $7$ and $5$, respectively, i.e. cannot be split further.
For $\ell=1$, we obtain the matrices $G_1^c$ and $\overline{G}_1^c$ and $\begin{pmatrix}
    G_2\\ G_1
\end{pmatrix}$ and for $\ell=0$ the matrix $[G_0\ G_1\ G_2\ G_3]$. In summary, we need that the non trivially zero full-size minors of the following matrices are nonzero:
$G_2^c$, $\overline{G}_1^c$, $\begin{pmatrix}
    G_2\\ G_1
\end{pmatrix}$, $[G_0\ G_1\ G_2\ G_3]$.
Using this, we found an $(3,1,3)$ MDS convolutional code over $\mathbb F_7$ defined by the generator matrix $G(z)=\sum_{i=0}^{3}G_iz^i$, with $G_0 = ( 4 \ 4 \ 2),$ $G_1 = (1 \ 4 \ 3)$, $G_2 = (4\ 6 \ 2)$ and $G_3 = (1 \ 2 \ 1)$, which additionally has optimal $j$-th column distance for $j\leq 2$ and optimal reverse column distance for $j\leq 1$.
\end{example}

\begin{example}
Take $k = 2,\ \delta = 4,\ n= 5$, i.e. $\mu = 2$ and $S=14$. These parameters fulfill the conditions of Theorem \ref{main}. Proceeding as described in Section 4.1, we obtain $W=4$, i.e. $E=F=1$ and see that $G_1^c$ and $\overline{G}_1^c$ cannot be split. Hence, we get the condition that all non trivially zero full-size minors of the matrices 
$\begin{pmatrix} G_0\ G_1\ G_2
\end{pmatrix}$,
$G_1^c$
and
$\overline{G}_1^c$
have to be nonzero. With the help of the computer we found the following solution over 
$\mathbb{F}_{31}$:

$G_0 = \left(\begin{array}{rrrrr}
\hspace{-0.15cm} 5 \hspace{-0.15cm} & \hspace{-0.15cm} 30 \hspace{-0.15cm} & \hspace{-0.15cm} 14 \hspace{-0.15cm} & \hspace{-0.15cm} 11 \hspace{-0.15cm} & \hspace{-0.15cm} 1 \hspace{-0.15cm}\\
\hspace{-0.15cm} 3 \hspace{-0.15cm}  & \hspace{-0.15cm} 23 \hspace{-0.15cm} & \hspace{-0.15cm} 21 \hspace{-0.15cm} & \hspace{-0.15cm} 12 \hspace{-0.15cm} & \hspace{-0.15cm} 5  \hspace{-0.15cm} \end{array}\hspace{-0.15cm} \right)$,
$G_1 = \left(\begin{array}{rrrrr}
\hspace{-0.15cm} 17 \hspace{-0.15cm} & \hspace{-0.15cm} 4 \hspace{-0.15cm} & \hspace{-0.15cm} 24 \hspace{-0.15cm} & \hspace{-0.15cm} 14 \hspace{-0.15cm} & \hspace{-0.15cm} 7 \hspace{-0.15cm}\\
\hspace{-0.15cm} 7 \hspace{-0.15cm}  & \hspace{-0.15cm} 24 \hspace{-0.15cm} & \hspace{-0.15cm} 12 \hspace{-0.15cm} & \hspace{-0.15cm} 20 \hspace{-0.15cm} & \hspace{-0.15cm} 22  \hspace{-0.15cm} \end{array}\hspace{-0.15cm} \right)$ and
$G_2 = \left(\begin{array}{rrrrr}
\hspace{-0.15cm} 14 \hspace{-0.15cm} & \hspace{-0.15cm} 0 \hspace{-0.15cm} & \hspace{-0.15cm} 12 \hspace{-0.15cm} & \hspace{-0.15cm} 19 \hspace{-0.15cm} & \hspace{-0.15cm} 1 \hspace{-0.15cm}\\
\hspace{-0.15cm} 23 \hspace{-0.15cm}  & \hspace{-0.15cm} 1 \hspace{-0.15cm} & \hspace{-0.15cm} 21 \hspace{-0.15cm} & \hspace{-0.15cm} 1 \hspace{-0.15cm} & \hspace{-0.15cm} 22  \hspace{-0.15cm} \end{array}\hspace{-0.15cm} \right)$.

For the given $k$ and $\delta$, the bound of \cite{mdssr} asks for $n\geq 7$, i.e. the parameters of this example are not possible. The only paper (up to our knowledge) that presents a construction for these parameters is \cite{sm01a}, which asks for $q-1=an$ with $a\geq 5$, i.e. the smallest possible field size there is $31$ as well. However, our code has the additional advantage that for $j\in\{0,1\}$, the $j$-th column distance and the $j$-th reverse column distance are optimal.
\end{example}

In the following, we present some examples where $k\nmid\delta$.

\begin{example}
Let $k=2$, $n=3$ and $\delta=3$, i.e. $\mu=2$ and $t=1$. Note that these parameters fulfill the conditions of Theorem \ref{main2}. We determine the optimal splitting according to the procedure in Section 4.2. We obtain $D=1$ and calculate $S=6$. This gives us $E=1$, i.e. $D\geq E$. Furthermore, we calculate $F=0$. This gives us for $\ell\geq 1$, the matrices $G_1^c$ and $\tilde{G}_2$ leading to weight $3$ each, i.e. it is not possible to split $G_1^c$. For $\ell=0$, we start with the matrices $[G_0\ G_1]$ and $\tilde{G}_2$ and calculate $A_1=3\geq k$ and $A_2=t$. This means we can split $[G_0\ G_1]$ into $G_0$ and $G_1$. If all non trivially zero full-size minors of $G_1^c$ are nonzero, this already implies that all full-size minors of $G_0$ are nonzero. In summary, we get the conditions that all non trivially zero full-size minors of $G_1^c$, $G_1$ and $\tilde{G}_2$ have to be nonzero. The following example over $\mathbb F_3$ fulfills these conditions:
$$G_0=\begin{pmatrix}
    1 & 0 & 2 \\ 2 & 1 & 2
\end{pmatrix}, \ G_1=\begin{pmatrix}
    1 & 1 & 1\\ 1 & 0 & 2
\end{pmatrix}, \ G_2=\begin{pmatrix}
    1 & 1 & 1\\ 0 & 0 & 0
\end{pmatrix}.$$
For the given $k$ and $\delta$, the bound of \cite{mdssr} asks for $n\geq 6$, i.e. the parameters of this example are not possible. The only paper that presents a construction for these parameters is \cite{sm01a}, which has smallest possible field $\mathbb F_{16}$. This means we manage to improve the field size a lot and additionally, our code has optimal $j$-th column distance for $j\in\{0,1\}$.
\end{example}

To be able to also do a comparison with \cite{mdssr}, we consider the following example.

\begin{example}
Let $k=2$, $n=6$ and $\delta=3$, i.e. $\mu=2$ and $t=1$ and $S=10$. In contrast to the previous example, we can now split $G_1^c$ and just need that all full-size minors of $G_0$, $\begin{pmatrix}
    G_1\\ G_0
\end{pmatrix}$, $G_1$ and $\tilde{G}_2$ are nonzero. An example fulfilling these conditions over $\mathbb F_7$ is
$$G_0=\begin{pmatrix}
   2 & 5 & 6 & 2 & 2 & 0\\ 6 & 5 & 5 & 0 & 3 & 4
\end{pmatrix}, G_1=\begin{pmatrix}
    4 & 6 & 4 & 4 & 5 & 5\\ 1 & 4 & 0 & 2 & 5 & 2
\end{pmatrix}, G_2=\begin{pmatrix}
    1 & 1 & 1 & 1 & 1\\ 0 & 0 & 0 & 0 & 0
\end{pmatrix}.$$
The conditions from \cite{mdssr} ask all full-size minors of $[G_0\ G_1]$ and all minors of $\begin{pmatrix}
    \tilde{G}_2\\ G_1 \\ G_0
\end{pmatrix}$ to be nonzero, which obviously is a strictly stronger condition leading to larger field size (the smallest field over which we found it fulfilled is $\mathbb F_{67}$). The smallest possible field for the construction in \cite{sm01a} is also for these parameters $\mathbb F_{16}$. 

\end{example}

\section{Conclusion}
We presented new criteria for the construction of MDS convolutional codes. Moreover, the constructed convolutional codes have optimal first column distances, and if $k\mid\delta$, also optimal first reverse column distances. We also provided some examples of codes fulfilling the new criterion over smaller finite fields than in previous constructions.










\end{document}